\newcommand{\bra}[1]{\left\langle #1 \right|}
\newcommand{\ket}[1]{\left| #1 \right\rangle}
\newcommand{\ketbra}[2]{\left|#1\middle\rangle\middle\langle#2\right|}
\newcommand{\norm}[1]{\left\|#1\right\|}
\newcommand{\abs}[1]{\left|#1\right|}
\newcommand{\id}{\mathbb{I}}
\newcommand{\M}[1]{\mathcal{#1}}
\newcommand{\B}[1]{\mathbb{#1}}
\newcommand{\Tr}{\text{Tr}}
\newcommand{\N}[1]{\Vert #1 \Vert}
\newcommand{\rhoAB}{\rho_{AB}\in\M{D}(\B{C}_A\otimes\B{C}_B)}
\newcommand{\apos}{\textsc{\char13}}
\newtheorem{teo}{Theorem}
\newtheorem{lemma}[teo]{Lemma}
\newtheorem{defi}[teo]{Definition}
\newtheorem{prop}[teo]{Proposition}
\newtheorem{remark}[teo]{Remark}
\begin{document}
\title{Koashi-Winter relation for $\alpha$-Renyi entropies} 
\author{Tiago Debarba} 
\email{debarba@utfpr.edu.br}
\affiliation{Universidade Tecnol\'ogica Federal do Paran\'a (UTFPR), Campus Corn\'elio Proc\'opio. 
R. Alberto Carazzai, 1640, Corn\'elio Proc\'opio, PR, 86300-000 - Brazil.}

\date{\today}
\begin{abstract}
This work presents a generalization of the Koashi-Winter relation for $\alpha$-Renyi entropies. 
This result is based on the Renyi\apos s entropy version of quantum 
Jensen Shannon divergence. By means of this definition, a classical correlations 
quantifier $C_{\alpha}(\rho_{AB}) = \sup_{\xi_{AB}^{M_B}} Q_{\alpha}(\xi_{AB}^{M_B})$ is proposed, where 
the optimization is taken over the ensembles $\xi_{AB}^{M_B}$ 
created by the outputs of the local measurement 
process. The main result is applied to the capacity of a quantum classical channel over a 
tripartite pure state $\psi_{ABE}$, that is rated above in function of the probability of success to discriminate 
the states in the ensemble $\xi_{AE}^{M_E}$, created by the local dephasing over partition $E$, and the asymptotic 
log generalized robustness of partition $AB$. Some analytical results are calculated for classical correlations and 
entanglement of formation.

\end{abstract}
\maketitle

\section{Introduction}

Quantum correlations are intrinsically related to quantum information theory, 
 as resources for quantum information protocols \cite{mother}. The quantification of quantum 
information rates is performed by a quantum version of entropic measures, especially by von Neumann entropy 
\begin{equation}
S(\rho) = - \Tr(\rho \log \rho),
\end{equation}
and its related measures \cite{wildebook},  
although there are generalizations of von Neumann entropy, and its related measures, by means 
of Tsallis\apos entropy \cite{tsallis88} and Renyi\apos s entropy \cite{renyi61}.  
The quantum version of Renyi\apos s entropy, in the context of quantum correlations 
and quantum information, has been a theme of intense investigation in the last few years 
\cite{mosonyi2011,matos12,muller2013,fehr2014,mosonyi2015,datta15,berta15,berta15b,wilde15,wilde16b,wilde16,hermes2017}. 
This is the context in which this work is inserted.

The main result of this work is the generalization of the Koashi - Winter relation for 
Renyi\apos s entropy. This result is based on the Renyi\apos s entropy version of quantum 
Jensen Shannon Divergence (QJSD) \cite{fehr2014}. By means of Renyi\apos s QJSD 
and the distinguishability of quantum states, 
a quantifier of classical correlations is introduced:  $C_{\alpha}$, for $\alpha\in (0,1)$. 
The properties for $C_{\alpha}$ to be a quantifier of classical correlations are discussed and proved.  
From the main result, Renyi's entanglement of formation for 
pure states is calculated, recovering Ref.\cite{sanders2010}. 
The values of $C_{\alpha}$ 
for pure states and classical correlated states are also calculated, recovering the 
classical Jensen Shannon Divergence. As an application of the main result, it is shown 
that the entanglement of formation, for $\alpha = 1/2$, is a convex roof of 
log generalized robustness. This final result is discussed in the context of channel capacity in asymptotic limit. 

This paper is organized as follow.   In Section \ref{sec.math} some 
mathematical concepts about density matrix, the CPTP channels over density operators, 
the Schatten - p norm for operators, and entropic measure for quantum systems are  introduced. 
In Section \ref{sec.qc}  the formalism of quantum correlations: quantumness 
of correlations and quantum entanglement is presented. The main results are presented in Section \ref{sec.result}. 
In Section \ref{log.rob} the results are 
applied for $\alpha = 1/2$ Renyi's entropy, a relation between $C_{\alpha}$, 
the distinguishability of the states in the ensemble created by local measurement and the log robustness is obtained.

\section{Formalism and Notation}\label{sec.math}
\subsection{Density Operator}\label{density}
This work deals with finite dimensional Hilbert spaces. A given Hilbert space $\M{H}_N = \B{C}^{N}$ is 
defined as a complex vector space. 
For a given Hilbert space there exists a dual space $\M{H}^{*}_N$, that maps $\M{H}$ to the complex numbers. For finite dimensional Hilbert spaces 
these two spaces are isomorphic, so $\M{H}_N^{*} = \B{C}^{N}$. 
The space of linear transformations acting on  a 
Hilbert space is denoted as $\M{L}(\B{C}^N,\B{C}^M)$. 
A given linear transformation $A$ belongs to the space $\M{L}(\B{C}^N,\B{C}^M)$, 
if $A:\B{C}^N\rightarrow\B{C}^M$. If $A$ is an operator, its space is denote as $\M{L}(\B{C}^N)$. 
The set 
of linear transformations on the Hilbert space $\M{L}(\B{C}^N,\B{C}^M)$ 
is also a Hilbert space, therefore it is equipped with inner product. 
For two operators $M,N \in \M{L}(\B{C}^N)$, the inner product is defined as the Hermitian form:
\begin{equation}
\langle M,N\rangle = \Tr(M^{\dagger}N). 
\end{equation}
As $\Tr(M^{\dagger}N)$ is a finite number, the vector space $\M{L}$ is 
often called the space of {\it bound operators}. 
Restricting the matrices in the positive cone of this space to be trace=$1$, it defines  
the set of density matrices   \cite{zyc}. This set of operators is a vector space denoted as $\M{D}(\B{C}^N)$. 
\begin{defi}
A linear positive operator $\rho\in\M{D}(\B{C}^N)$ is a density matrix, and represents the state of 
a quantum system, if it satisfies the following properties:
\begin{itemize}
\item{Positive semi-definite}:  $ \rho \geq 0;$ 
\item{Trace one}:  $\Tr(\rho) =1$.
\end{itemize} 
\end{defi}

A linear transformation $\Phi: \M{L}(\B{C}^N)\rightarrow \M{L}(\B{C}^M)$ represents a physical process if it is completely positive and preserves the trace. In 
other words, the transformation $\Phi$ satisfies: 
\begin{itemize}
\item{\bf Completely Positive:} Consider a composed system described by $\sigma_{AB}\in\M{D}(\B{C}_{A}\otimes\B{C}_{B})$
\begin{equation}
\id_A\otimes\Phi_B(\sigma_{AB})\geq 0.
\end{equation}
\item{\bf Trace preserving:} For a given density matrix $\rho\in\M{D}(\B{C}^N)$
\[ \Tr[\Phi(\rho)] = \Tr[\rho] =1 \]
\end{itemize}
 Satisfying these properties the transformations map a density matrix into another density matrix, named a Completely Positive and Trace Preserving (CPTP)  channel.  
The set of CPTP channels is  denoted as $\M{P}(\B{C}^N,\B{C}^M)$.

Isometric transformations are 
linear transformations $V:\B{C}^{N}\rightarrow\B{C}^M$ 
satisfying:
\begin{equation}
V^{\dagger}V = \id_{N},
\end{equation}
$\M{U}(\B{C}^{N},\B{C}^{M})$ is the set of isometric operations. 
Isometries that map the space $\B{C}^{N}$ on itself are
named {\it unitary operators}. The set of unitary operations on $\B{C}^N$ is denoted as $U\in\M{U}(\B{C}^N)$. 
A unitary operator $U\in\M{U}(\B{C}^{N})$, satisfies $U^{\dagger}U=UU^{\dagger}=\id_{N}$. 
An isometric transformation preserves the inner product, and consequently the spectra of 
the operators. 

An important CPTP channel for this work is the local measurement map 
$M_A\in\M{P}(\B{C}_A,\B{C}_X)$. Where $\text{dim}(\B{C}_A) = |A|$ and  $\text{dim}(\B{C}_X) = |X|$. For projective measurements, the measurement map is 
simply the dephasing operation in the measured orthonormal basis. Thus, considering a 
bipartite state $\rhoAB$ and a local projective measurement over $A$, the post-measurement state 
is: 
\begin{equation}
M_A\otimes\id_B(\rho_{AB}) = \sum_x p_x \ketbra{a_x}{a_x}\otimes\rho_x^B, 
\end{equation}
where $\{\ket{a_x}\}_{1}^{|A|}$ is an orthonormal basis, and 
$p_x\rho_x^B = \Tr_A(\ketbra{a_x}{a_x}\otimes\id_B \rho_{AB})$. 
The state  $\rho_x^B$ is an output of the measurement process with probability $p_x$.
For general measurement processes described  by positive-operator valued measure  
(POVM), by Naimark\apos s dilation theorem the approach is the same as for projective measurements, the only difference 
is in the cardinality of the random variable $X = \{p_x \}_{x=1}^{|X|}$, that in this case, is equal to 
the number of POVM elements \cite{naimark}. The local measurement outputs create an ensemble of quantum states 
$\xi_{AB}^{M} = \{p_x,\rho_x^B \}_{x=1}^{|X|}$. 
 
\subsection{Schatten-p norm}\label{snorm}

The Schatten-p norm for operators is analogous to the $l_p$ norm for 
vectors. 
\begin{defi}
Given a linear operator $A\in\M{L}(\B{C}^N,\B{C}^M)$, the Schatten-p norm is defined as:
\begin{equation}
\Vert A \Vert_p = \{\Tr[(AA^{\dagger})^{p/2}]\}^{1/p}, 
\end{equation}
where $p = [1,\infty)$.
\end{defi}
The Schatten-p norm can be written as the $l_p$ norm of the spectral 
decomposition of the matrix $A$:
\begin{equation}
\Vert A \Vert_p = \left\{\sum_k |\lambda(A)_k|^{p}]\right\}^{1/p}, 
\end{equation}
where $\{\lambda(A)_k\}_k$ are the eigenvalues of $A$.
As the Schatten norm only depends on the eigenvalues of the matrix, it is invariant under action of isometries.

\subsection{Quantum Entropies}
Considering a random variable $X = \{p_x\}$, where $p_x\geq 0$ and $\sum_x p_x =1$, 
its $\alpha$ - Renyi\apos s entropy  is defined as \cite{renyi61}: 
\begin{equation}
H_{\alpha}(X) = \frac{1}{1-\alpha} \log\sum_x p^{\alpha}_x.
\end{equation}
For $\alpha\rightarrow 1$ it is the Shannon entropy of $X$: 
\[H_{1}(X) =- \sum_x  p_x \log p_x. \] 
The quantum version of the Renyi\apos s entropy is defined as \cite{wehrl78}: 
\begin{defi}[Quantum $\alpha$ entropy] 
Given a quantum state $\rho\in\M{D}(\B{C}^N)$, its $\alpha$ entropy is 
\begin{equation}
S_{\alpha}(\rho) = \frac{1}{1-\alpha} \log\N{\rho}_{\alpha}^{\alpha},
\end{equation}
where $\N{\rho}_{\alpha}$ is the Schatten  norm, then  
$\N{\rho}_{\alpha}^{\alpha} = \Tr(\rho^{\alpha})$.
\end{defi}
For quantum Renyi\apos s  entropy, the following proposition is introduced. 
\begin{prop}\label{propconc}
Renyi\apos s entropy is a concave function for $\alpha \in (0,1)$: 
\begin{equation}
S_{\alpha}(\sum_k p_k \rho_k) \geq \sum_k p_k S_{\alpha}(\rho_k)
\end{equation}
\end{prop}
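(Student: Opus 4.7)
The plan is to reduce the claim to two facts: (i) operator concavity of the power function $x\mapsto x^{\alpha}$ for $\alpha\in(0,1)$ (L\"owner--Heinz), and (ii) concavity and monotonicity of the scalar logarithm. Let $\bar{\rho}=\sum_k p_k\rho_k$ with $p_k\geq 0$, $\sum_k p_k = 1$, and $\rho_k\in\M{D}(\B{C}^N)$.

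First I would invoke L\"owner--Heinz to conclude that in the positive semi-definite order,
\begin{equation}
\bar{\rho}^{\alpha} \;\geq\; \sum_k p_k \,\rho_k^{\alpha},
\end{equation}
since $x^{\alpha}$ is operator concave on $[0,\infty)$ for $\alpha\in[0,1]$. Taking the trace (which is positive, hence preserves the semi-definite inequality) yields
\begin{equation}
\Tr\!\de{\bar{\rho}^{\alpha}} \;\geq\; \sum_k p_k \,\Tr\!\de{\rho_k^{\alpha}} \;=\; \sum_k p_k \N{\rho_k}_{\alpha}^{\alpha}.
\end{equation}

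Next I would exploit monotonicity and concavity of $\log$ on $(0,\infty)$. Monotonicity gives $\log\Tr(\bar{\rho}^{\alpha})\geq \log\sum_k p_k\Tr(\rho_k^{\alpha})$, and scalar concavity of $\log$ (Jensen) gives $\log\sum_k p_k\Tr(\rho_k^{\alpha})\geq \sum_k p_k \log\Tr(\rho_k^{\alpha})$. Chaining these two,
\begin{equation}
\log\Tr\!\de{\bar{\rho}^{\alpha}} \;\geq\; \sum_k p_k \log\Tr\!\de{\rho_k^{\alpha}}.
\end{equation}
Finally, because $\alpha\in(0,1)$, the prefactor $\frac{1}{1-\alpha}$ is strictly positive, so multiplying preserves the inequality and produces exactly
\begin{equation}
S_{\alpha}(\bar{\rho}) \;=\; \frac{1}{1-\alpha}\log\Tr\!\de{\bar{\rho}^{\alpha}} \;\geq\; \sum_k p_k \,S_{\alpha}(\rho_k),
\end{equation}
which is the claim.

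The only non-routine ingredient is step (i): the operator inequality $\bar{\rho}^{\alpha}\geq \sum_k p_k \rho_k^{\alpha}$ is not an immediate consequence of scalar concavity of $t\mapsto t^{\alpha}$, and is where the restriction $\alpha\in(0,1)$ genuinely enters (for $\alpha>1$ the corresponding scalar function is convex and operator concavity fails). I would either cite L\"owner--Heinz directly, or, if a self-contained derivation is wanted, derive it from the integral representation
\begin{equation}
x^{\alpha} \;=\; \frac{\sin(\alpha\pi)}{\pi}\int_0^{\infty} \frac{x}{x+t}\, t^{\alpha-1}\, dt,
\end{equation}
together with the elementary operator concavity of each $x\mapsto x(x+t)^{-1}$ for $t>0$. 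Everything else — the trace step, the $\log$ step, and the sign of $1-\alpha$ — is routine, so this single operator-concavity input is the crux of the argument.
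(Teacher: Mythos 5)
Your argument is correct and complete. Note that the paper itself offers no in-text proof of Proposition~\ref{propconc}: it simply defers to Ref.~\cite{raviv78} and to the discussion in Ref.~\cite{mosonyi2011}, so your derivation is genuinely supplying something the paper leaves implicit. The chain you build --- operator concavity of $x\mapsto x^{\alpha}$ (L\"owner--Heinz) to get $\bar{\rho}^{\alpha}\geq\sum_k p_k\rho_k^{\alpha}$, positivity of the trace, monotonicity plus scalar concavity of $\log$, and finally the positive prefactor $1/(1-\alpha)$ --- is valid at every step, and you correctly flag the operator-concavity input as the only non-routine ingredient. One remark: that ingredient can be weakened. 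To obtain $\Tr(\bar{\rho}^{\alpha})\geq\sum_k p_k\Tr(\rho_k^{\alpha})$ you do not need the full operator inequality; the more elementary fact that $A\mapsto\Tr f(A)$ is concave on positive operators whenever the \emph{scalar} function $f$ is concave (a standard consequence of Peierls' inequality) already suffices, which makes the proof self-contained without citing L\"owner--Heinz or the integral representation. Either way, the restriction $\alpha\in(0,1)$ enters exactly where you locate it: through concavity of $t\mapsto t^{\alpha}$ and through the sign of $1-\alpha$, the latter being why the same argument breaks down for $\alpha>1$.
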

The proof of this proposition was performed in Ref.\cite{raviv78}, and a modern discussion about this issue can be found on \cite{mosonyi2011}.
For $\alpha \rightarrow 1$ the $\alpha$ - entropy is equal the von Neumann entropy  \cite{wehrl78}: 
\begin{equation}
\lim_{\alpha\rightarrow 1} S_{\alpha}(\rho) = S(\rho) = -\Tr\left(\rho\log\rho  \right). 
\end{equation}

For composed systems, the total amount of correlations in a bipartite state 
$\rhoAB$ can be quantified by means of mutual information:
\begin{equation}
I(A:B)_{\rho_{AB}} = S(\rho_A) + S(\rho_B) - S(\rho_{AB}),  
\end{equation}
where $\rho_A = \Tr_B(\rho_{AB})$, the same for $\rho_B$.
Mutual information is zero if the state is a product state $\rho_{AB} = \rho_A\otimes\rho_B$. 
For classical quantum states $\rho_{AX} = \sum_x p_x \rho_x^A\otimes\ketbra{x}{x}$, 
where $X=\{p_x\}$ is a classical register, the mutual information is: 
\begin{equation}
I(A:X)_{\rho_{AX}} = S(\rho_A) - \sum_x p_x S({\rho_x^A}). 
\end{equation}
This function quantifies the distinguishability of states in the ensemble 
$\xi_A = \{p_x,\rho_x^A \}$, and is also named Jensen Shannon divergence \cite{majteyjensen}, 
represented in this work as
\begin{equation}
Q(\xi_A) = S(\rho_A) - \sum_x p_x S({\rho_x^A}). 
\end{equation}
For an ensemble of quantum state composed of two states, 
it is defined as the symmetric and smoothed version of 
Shannon relative entropy \cite{majteyjensen,jensenentropy}.  
It is related to the Bures distance and 
induces a metric for pure quantum states, related to the 
Fisher-Rao metric \cite{jensenmetric}. Holevo\apos s Theorem states that quantity is the 
capacity of a given channel to transmit classical information \cite{holevo1973information,schumacher97}.

\section{Quantum Correlations}\label{sec.qc}

This section presents some definitions and discussions about quantum correlations. 
First the class of classical correlated state 
and the definition of a quantifier of classical correlations are presented. 
Next  the concept of quantum entanglement is discussed. 
Finally  
the Koashi - Winter relation is introduced, which interplays classical correlations and quantum entanglement. 
\subsection{Quantumness of Correlations}
A composed state $\rho_{AB}\in\M{D}(\B{C}_A\otimes\B{C}_B)$ is said to be classically correlated if 
a local projective measurement $\Pi_A\otimes\Pi_B$, that commutes with the state, exists \cite{opp,piani08,zurek2001,henderson}
\begin{equation}
[\rho_{AB},\Pi_A\otimes\Pi_B] =0. 
\end{equation}
The class of states that satisfies this equation is composed of states in the following form:
\begin{equation}\label{classcorsta}
\rho_{AB} = \sum_{i=1}^{|A|}\sum_{j=1}^{|B|}p_{i,j} \Pi^A_i\otimes\Pi^B_j, 
\end{equation}
where $\sum_{i=1}^{|A|}\sum_{j=1}^{|B|}p_{i,j}=1$, $p_{i,j}\geq 0$ and $\Pi_A\otimes\Pi_B\in \{\Pi^A_i\otimes\Pi^B_j\}_{i,j}$. 
The amount of classical correlations in a quantum state is measured by the capacity 
to extract information locally \cite{fanchini2012}. As the measurement process is a classical 
statistical inference, classical correlations can be quantified  by the amount of correlations 
remaining in the system after a local measurement \cite{piani08}. 
\begin{defi}[Classical Correlations]
For a bipartite density matrix $\rho_{AB}\in\M{D}(\B{C}_A\otimes\B{C}_B)$, classical 
correlations between $A$ and $B$ can be quantified by the amount of correlations 
 extracted by means of local measurements:
\begin{equation}\label{f1}
J(A:B)_{\rho_{AB}}= \max_{\id\otimes \M{B}\in\M{P}}{I(A:X)_{\id\otimes \M{B}(\rho_{AB})}},  
\end{equation}
where the optimization is taken over the set of local measurement maps 
$\id\otimes \M{B}\in\M{P}(\B{C}_{AB},\B{C}_{AX})$ and  
$\id\otimes \M{B}(\rho_{AB}) = \sum_x p_x \rho_{x}^{A}\otimes\ketbra{b_x}{b_x}$ 
is a quantum-classical state in the space $\M{D}(\B{C}_A\otimes\B{C}_X)$. 
\end{defi}

As the mutual information quantifies the total amount of 
correlations in the state, it is possible to define a quantifier of quantum 
correlations as  
the difference between the total correlations in the system, 
quantified by mutual information, and  the classical correlations, 
measured by Eq.\eqref{f1}. This measure of quantumness of correlations 
is named {\it quantum discord} \cite{zurek2001,henderson}:
\begin{equation}\label{f2}
D(A:B)_{\rho_{AB}} = I(A:B)_{\rho_{AB}} - J(A:B)_{\rho_{AB}},
\end{equation}
where $I(A:B)_{\rho_{AB}}$ is the von Neumann mutual information. 
The quantum discord quantifies the amount of information,  
that cannot be accessed via local measurements \cite{fanchini2012}, therefore 
it measures the quantumness of correlations between $A$ and $B$ 
that cannot be recovered via a classical statistical inference process. 
\subsection{Entanglement}
A pure state $\ket{\psi}_{AB}\in\B{C}_A\otimes\B{C}_B$ is uncorrelated if 
it can be written as a tensor product of pure states of each partition:
\begin{equation}
\ket{\psi}_{AB} = \ket{a}\otimes\ket{b},
\end{equation}
where $\ket{x}\in\B{C}_X$, for $x=\{a,b\}$. 
By the definition of classical correlations in Eq.\eqref{classcorsta}, 
a convex combination of product states can be quantum correlated. Therefore,  taking convex combinations of non orthogonal states results in the 
notion of  {\it separable state} \cite{werner89}. 
\begin{defi}[Separable states]\label{defsep}
Considering a composed system described by the state\\ 
$\sigma\in\M{D}(\B{C}_{A}\otimes\B{C}_{B})$, it is a 
{\it separable state} if and only if it can be written as:
\begin{equation}\label{sep}
\sigma_{AB} = \sum_{i,j}p_{i,j}\ketbra{\psi_i}{\psi_i}_A\otimes\ketbra{\phi_i}{\phi_i}_B,
\end{equation} 
where $\ket{\psi_i}_A\in\B{C}_A$ and $\ket{\phi_i}_B\in\B{C}_B$.
\end{defi}
Note that the states $\{\ket{\psi_i}_A\}_{i}$ and $\{\ket{\phi_j}_B\}_{j}$ are, in general, not orthogonal states. If these sets are composed of orthogonal states, the state in Eq.\eqref{sep} is 
classically correlated.  Quantum entanglement is defined as the negation of Definition \ref{defsep} \cite{schrodinger35}:
\begin{defi}[Entanglement]
A composed state $\rhoAB$ is entangled if it is not separable.
\end{defi} 
The amount of quantum entanglement of a bipartite system $\rhoAB$ can be quantified by the entanglement of formation. 
The entanglement 
of formation is interpreted as the minimum amount of entangled 
pure states required to build $\rho_{AB}$, by means of a convex combination  \cite{bennettpra}.
\begin{defi}[Entanglement of Formation]
Considering a quantum state $\rho\in\M{D}(\B{C}_{A}\otimes\B{C}_{B})$, 
the entanglement of formation is defined as the convex roof:
\begin{equation}
E_f(\rho) = \inf_{\xi_{\rho}}\sum_i p_i E(\ket{\psi_i}),
\end{equation}
where the minimization is performed over all ensembles $\xi_{\rho}=\{p_i,\ketbra{\psi_i}{\psi_i}\}_{i=1}^M$, 
such that $\rho = \sum_i p_i \ketbra{\psi_i}{\psi_i}$, $\sum_i p_i =1$ and $p_i\geq 0$. 
\end{defi}
The function $E(\ket{\psi_i}) = S(\rho_A)$ is named entropy of entanglement \cite{schumacher95,ekert95}, and is the usual quantifier of entanglement 
for pure states.  
For pure states, entanglement of formation is equal to the classical correlations and quantum discord \cite{fanchini11}:
\begin{equation*}
E_f({\psi}_{AB}) = J(A:B)_{\psi_{AB}} = D(A:B)_{\psi_{AB}} = S(\rho_A), 
\end{equation*}
where $\ketbra{\psi}{\psi}_{AB}\in\M{D}(\B{C}_A\otimes\B{C}_B)$ is a pure state and 
$\rho_A = \Tr_B({\psi})$. 

As in this work the  interest is in Renyi\textsc{\char13}s entropies, the $\alpha$ - Entanglement of Formation (EoF) is introduced \cite{sanders2010}: 
\begin{defi}
$\alpha$ - entanglement of formation of a bipartite state $\rhoAB$ is defined as:
\begin{equation}
E_f^{\alpha}(\rho_{AB}) =  \inf_{\{p_i,\ketbra{\psi_i}{\psi_i}\}_i}\sum_i p_i S_{\alpha}(\Tr_B(\ketbra{\psi_i}{\psi_i}_{AB})
\end{equation}
for $\rho_{AB} = \sum_i p_i \ketbra{\psi_i}{\psi_i}_{AB}$.
\end{defi}
$S_{\alpha}(\Tr_B(\ketbra{\psi_i}{\psi_i}_{AB})$ is the $\alpha$ - entropy of entanglement, which  is an entanglement monotone for $\alpha\in (0,1)$ \cite{horodecki96,vidal99}.
The Schur concavity of $\alpha$ - entropy for $\alpha\in (0,1)$ guarantees that $\alpha$ - entanglement of formation is a monotone function under the action of local operations and classical communication (LOCC) \cite{sanders2010}.

\subsection{Koashi - Winter relation}
\label{seckw}

Given a bipartite system $\rho_{AB}\in\mathcal{D}(\mathbb{C}_A\otimes \mathbb{C}_B)$, 
and its purification $\ket{\psi}_{ABE}\in\B{C}_A\otimes\B{C}_B\otimes\B{C}_E$. 
The dimension of the global space is: 
$\text{dim}(\mathbb{C}_{ABE})=\text{dim}(A)\cdot\text{dim}(B)\cdot \text{rank}(\rho_{AB})$. 
The purification creates quantum correlations between the 
system $AB$ and the purification system $E$, unless the state is already 
pure. The balance between the correlations of a tripartite purification is settled by the 
 Koashi-Winter relation \cite{kw04}. 

\begin{teo}[Koashi-Winter relation]
Considering $\rho_{ABE}\in\mathcal{D}(\mathbb{C}_A\otimes \mathbb{C}_B \otimes \mathbb{C}_E)$ a pure state then:
\begin{equation}\label{kw}
J(A:E)_{\rho_{AE}} = S(\rho_{A}) - E_f(\rho_{AB}),
\end{equation}
where $\rho_X = Tr_{Y}[\rho_{YX}]$.
\end{teo}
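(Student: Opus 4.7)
The plan is to expand $J(A:E)_{\rho_{AE}}$ using its definition as a supremum of mutual informations after local measurement on $E$, exploit the purity of $\ket{\psi}_{ABE}$ to relate conditional states on $A$ to marginals of a pure decomposition of $\rho_{AB}$, and invoke the Hughston--Jozsa--Wootters (HJW) theorem to recognize the resulting optimization as the convex roof defining $E_f(\rho_{AB})$.

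First I would reduce the supremum in $J(A:E)$ to rank-one POVMs $\{\ketbra{e_x}{e_x}_E\}_x$ on $E$. Any coarser POVM admits a rank-one refinement, and since coarsening the classical register $X$ is a classical channel, the data-processing inequality for mutual information guarantees that refinement never decreases $I(A:X)$. Writing $p_x\rho_x^A = \Tr_E[(\id_A\otimes\ketbra{e_x}{e_x}_E)\rho_{AE}]$, one obtains $I(A:X)=S(\rho_A)-\sum_x p_x S(\rho_x^A)$ from the classical-quantum structure, hence
\begin{equation*}
J(A:E)_{\rho_{AE}} = S(\rho_A) - \inf_{\{\ket{e_x}_E,\, p_x\}} \sum_x p_x S(\rho_x^A).
\end{equation*}

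The purity of $\ket{\psi}_{ABE}$ enters next: projecting the global pure state onto $\ket{e_x}_E$ yields a pure conditional state $\ket{\phi_x}_{AB}$ on $AB$, with $\rho_{AB}=\sum_x p_x\ketbra{\phi_x}{\phi_x}_{AB}$, and tracing out $B$ from $\ketbra{\phi_x}{\phi_x}_{AB}$ reproduces precisely $\rho_x^A$. Since $\ket{\phi_x}_{AB}$ is pure, Schmidt decomposition across the $A|B$ bipartition gives $S(\rho_x^A) = E(\ket{\phi_x}_{AB})$. The HJW theorem then asserts that as the rank-one POVM on $E$ varies, the induced ensemble $\{p_x,\ket{\phi_x}_{AB}\}$ sweeps over all pure-state decompositions of $\rho_{AB}$, so the infimum above coincides with the convex roof defining $E_f(\rho_{AB})$. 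This yields Eq.~\eqref{kw}.

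The main obstacle is the reduction to rank-one POVMs together with the careful sign tracking so that a supremum over measurements on $E$ translates into an infimum over pure decompositions of $\rho_{AB}$, and the precise invocation of HJW to establish the bijection; both ingredients are standard but must be deployed with care. Purity of $\ket{\psi}_{ABE}$ is indispensable throughout, since without it the conditional states on $AB$ would generically be mixed, Schmidt decomposition across $A|B$ would not apply, and the identification $S(\rho_x^A)=E(\ket{\phi_x}_{AB})$ would break down.
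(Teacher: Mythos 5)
Your proof is correct and is the standard argument for the Koashi--Winter relation; the paper itself states this theorem without proof (citing Koashi and Winter), but your strategy --- restricting to rank-one POVMs on $E$, using purity of $\ket{\psi}_{ABE}$ to make the conditional states on $AB$ pure so that $S(\rho_x^A)$ equals the entropy of entanglement, and invoking the HJW theorem to identify the measurement-induced ensembles with all pure-state decompositions of $\rho_{AB}$ --- is exactly the route the paper follows when proving its Renyi generalization in Theorem~\ref{main}. The only substantive difference is that you make explicit two steps the paper leaves implicit, namely the refinement/data-processing argument justifying the restriction to rank-one POVMs and the explicit appeal to HJW for the bijection between measurements on $E$ and decompositions of $\rho_{AB}$.
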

The Koashi-Winter equation quantifies the amount of entanglement among $A$ and  
$B$, considering that the former is classically correlated with another system $E$. 
This property 
is interesting as it is related with the monogamy of entanglement 
\cite{wooters00}. 
An analogous expression of the Koashi - Winter relation has been obtained for quantum discord and entanglement of formation \cite{fanchini11}. 
From this new relation, the irreversibility of the entanglement distillation protocol  
and quantum discord  are interplayed \cite{cornelio11}. This result can be applied for the state merging protocol \cite{esm,dattasm}.   
In addition to the above relation, some upper and lower bounds between quantum discord 
and entanglement of formation have been calculated via the Koashi-Winter relation 
and entropic properties  \cite{yu,xi,xib,zhang}.
By means of Eq.\eqref{kw}  quantum discord and 
entanglement of formation were calculated analytically for systems with  
rank-$2$ and dimension 
$2\otimes n$ \cite{fanchini2012,lastra,cen}.

In this work a generalization of the Koashi - Winter relation is calculated  for a class of Renyi\apos s entropies for the parameter $\alpha \in (0,1)$.  

\section{Results}\label{sec.result}
First the $\alpha$ Quantum Jensen Shannon divergence (QJSD) is introduced  \cite{briet2009}.
\begin{defi}[$\alpha$ - Quantum Jensen Shannon divergence]
Given a quantum ensemble $\xi = \{p_k, \rho_k \}_{k}^{M}$, for $\rho_k\in\M{D}(\B{C}^N)$ the Renyi\apos s quantum Jensen Shannon divergence, for $\alpha \in (0,1)$ 
is defined as: 
\begin{equation}
Q_{\alpha}(\xi) = S_{\alpha}(\sum_k p_k \rho_k) - \sum_k p_k S_{\alpha}(\rho_k),  
\end{equation}
where \[S_{\alpha}(X) = \frac{1}{1-\alpha}\log\norm{X}^{\alpha}_{\alpha}\] and $\norm{\cdot}_\alpha$ is the Schatten norm \[ \norm{X}_{\alpha}^{\alpha} = \Tr(X^{\alpha}),\] for any 
matrix $X\in\M{L}(\B{C}^N)$. 
\end{defi}
A corollary of the concavity of the Renyi\apos s entropy is the positivity of the $\alpha$ - QJSD function: 
\begin{equation}
Q_{\alpha}(\xi) \geq 0,
\end{equation}
for $\alpha \in (0,1)$. The $\alpha$ - QJSD is zero if the ensemble has cardinality equal to one ($M = 1$), and maximum if the states in the ensemble are 
pure and lineally independent. The $\alpha$ - QJSD is a generalization of the QJSD \cite{briet2009}, and quantifies the distinguishability between the states in the ensemble. 

As aforementioned, local measurements create a quantum ensemble in the non measured partition, composed by the output states. Given this property  the function is defined: 
\begin{defi}
For a bipartite state $\rho_{AB}$, one can define the function 
\[C_{\alpha}(\rho_{AB}) = \sup_{\xi_{AB}^{M_B}} Q_{\alpha}(\xi_{AB}^{M_B}), \]
where  $\xi_{AB}^{M_B} = \{ p_x, \rho_x^{A} \}$ for $p_x\rho_x^A = \Tr_B(\id_A\otimes M_x \rho_{AB})$ and $\{ M_x\}_x$ are elements of a POVM.  
\end{defi}
As discussed $Q_{\alpha}(\xi_{AB}^{M_B})$ quantifies the distinguishability between the states in the ensemble $\xi_{AB}^{M_B}$, therefore $C_{\alpha}(\rho_{AB})$ 
measures the distinguishability between the states in the ensemble created by the measurement $M_B$ such that the states are the most distinguishable. For von Neumann entropy this quantity quantifies classical correlations of the state $\rho_{AB}$.  

The relation of $C_{\alpha}(\rho_{AB})$ with quantum correlations in $\rho_{AB}$ is stated in the main result of this work, presented below.

\begin{teo}\label{main}
Considering a pure tripartite state $\psi_{ABE} = \rho_{ABE} \in \M{D}(\B{C}_A\otimes \B{C}_B\otimes \B{C}_E)$, the following equality holds: 
\begin{equation}\label{t1}
C_{\alpha}(\rho_{AE}) = \sup_{\xi_{AE}^{\M{M}_E}} Q_{\alpha}(\xi_{AE}^{\M{M}_E}) = S_{\alpha}(\rho_A) - E_{f}^{\alpha}(\rho_{AB})
\end{equation}
\end{teo}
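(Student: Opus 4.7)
The plan is to reduce the supremum over measurements on $E$ to an infimum over pure-state decompositions of $\rho_{AB}$, exactly as in the original Koashi--Winter proof, and then observe that the whole argument goes through verbatim once $S$ is replaced by $S_{\alpha}$, because Proposition \ref{propconc} (concavity of $S_{\alpha}$ for $\alpha\in(0,1)$) is all that is needed to guarantee $Q_{\alpha}\ge 0$ and to make the ensemble-dependent term well-behaved.

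First, I would fix an arbitrary POVM $\{M_x\}_x$ on $E$ entering the definition of $C_{\alpha}(\rho_{AE})$. By Naimark's theorem, I can dilate it to a projective measurement on an enlarged purifying register without changing any of the probabilities $p_x$ or the post-measurement conditional states on $A$, so without loss of generality the measurement is projective in some orthonormal basis $\{\ket{e_x}\}$ of $E$ (possibly after attaching an auxiliary system to $E$, which preserves the purity of $\psi_{ABE}$). The key step is then the Schr\"odinger--HJW (Hughston--Jozsa--Wootters) correspondence: writing
\[
\ket{\psi}_{ABE}=\sum_x \sqrt{p_x}\,\ket{\phi_x}_{AB}\otimes\ket{e_x}_E ,
\]
with $p_x\ketbra{\phi_x}{\phi_x}_{AB}=\Tr_E[(\id_{AB}\otimes\ketbra{e_x}{e_x})\psi_{ABE}]$, produces a pure-state decomposition $\rho_{AB}=\sum_x p_x\ketbra{\phi_x}{\phi_x}_{AB}$, and conversely every such decomposition of $\rho_{AB}$ arises in this way from some orthonormal basis (equivalently, some POVM) on $E$.

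The computational step is then immediate: the conditional state of $A$ given outcome $x$ is
\[
\rho_x^A=\Tr_B\ketbra{\phi_x}{\phi_x}_{AB},
\]
so $S_{\alpha}(\rho_x^A)$ coincides with the $\alpha$-entropy of entanglement of the pure state $\ket{\phi_x}_{AB}$. Plugging into the definition of $Q_{\alpha}$ and using that $\sum_x p_x\rho_x^A=\rho_A$ is independent of the measurement, I obtain
\[
Q_{\alpha}(\xi_{AE}^{M_E})=S_{\alpha}(\rho_A)-\sum_x p_x\,S_{\alpha}\!\bigl(\Tr_B\ketbra{\phi_x}{\phi_x}_{AB}\bigr).
\]
Taking $\sup$ over POVMs on $E$ on the left becomes $\inf$ over pure-state decompositions $\{p_x,\ket{\phi_x}_{AB}\}$ of $\rho_{AB}$ on the right, because the HJW correspondence is a bijection in the sense above. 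The infimum is, by definition, $E_{f}^{\alpha}(\rho_{AB})$, which proves \eqref{t1}.

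The main obstacle is the bijection part: one must verify that every pure-state decomposition of $\rho_{AB}$ entering the convex roof for $E_{f}^{\alpha}$ is indeed realized by some (possibly non-projective) measurement on $E$, so that the infimum and the supremum are taken over matching sets. Decompositions with more terms than $\mathrm{rank}(\rho_{AB})$ require non-projective POVMs, and this is exactly where the Naimark dilation is needed; I would handle it by noting that attaching an ancilla to $E$ keeps $\psi_{ABE}$ pure and enlarges the set of attainable decompositions up to arbitrary cardinality, so both the $\sup$ and the $\inf$ range over the same set of ensembles $\{p_x,\ket{\phi_x}_{AB}\}$. Concavity of $S_{\alpha}$ (Proposition \ref{propconc}) guarantees the infimum is actually attained on a finite ensemble, so no further care about the range of cardinalities is required.
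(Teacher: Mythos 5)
Your proof follows essentially the same route as the paper's: both reduce the supremum over measurements on $E$ to an infimum over pure-state decompositions of $\rho_{AB}$ via the purification--ensemble correspondence and then identify that infimum with $E_f^{\alpha}(\rho_{AB})$. You are in fact more explicit than the paper about the Hughston--Jozsa--Wootters bijection and the Naimark dilation needed to reach decompositions of arbitrary cardinality --- the step the paper only asserts with ``it is clear that there exists a POVM such that\dots'' --- though your closing claim that concavity of $S_{\alpha}$ is what guarantees attainment on a finite ensemble is not quite the right justification (that is a Carath\'eodory-type compactness argument, not concavity).
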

\begin{proof}
There exists a set of orthogonal states $\{\ket{l}\}_{l=1}^{\abs{E}}$ such that the states $\psi_{ABE}$ can be written as:
\[ \ket{\psi}_{ABE} = \sum_l p_l \ket{\phi_l}_{AB}\ket{l}_E, \]
thus the reduced density matrix $\Tr_E(\psi_{ABE})=\rho_{AB} = \sum_{l} p_l \ketbra{\phi_l}{\phi_l}$. Performing a measurement on subsystem $E$, such that the POVM elements 
of the measurement are rank-1 operators $\{ M_x^E = \ketbra{\mu_x}{\mu_x} \}$, where $\sum_x M_x^E = 1$ and $M_x^E \geq 0$, the post-measurement state is: 
\begin{align}
\rho_{ABE'} &= \id_{AB}\otimes\M{M}_E(\rho_{ABE})\\ &= \sum_x \Tr_E\left( \id_{AB}\otimes \ketbra{\mu_x}{\mu_x}_E \rho_{ABE} \right) \otimes \ketbra{e_x}{e_x}_{E'},\\
&=\sum_x q_x \ketbra{\psi_x}{\psi_x}_{AB}\otimes\ketbra{e_x}{e_x}_{E'},
\end{align}
where $q_x \ketbra{\psi_x}{\psi_x}_{AB} = \Tr_E\left( \id_{AB}\otimes \ketbra{\mu_x}{\mu_x}_E \rho_{ABE} \right)$. As $\rho_{AB} = \Tr(\rho_{ABE}) = \Tr(\rho_{ABE'})$, 
it is clear that there exists a POVM such that: 
\begin{align*}
q_x &= p_x \\
\ket{\psi_x}_{AB} &= \ket{\phi_x}_{AB}.
\end{align*}
In subsystem $AE'$ the post-measurement state is: 
\begin{align}
\rho_{AE'} &= \Tr_B\left( \rho_{ABE'}  \right)\\ &= \sum_x q_x \Tr_B(\ketbra{\psi_x}{\psi_x}_{AB}\otimes\ketbra{e_x}{e_x}_{E'}) \\
& = \sum_x q_x \rho_x^A\otimes\ketbra{e_x}{e_x}_{E'},
\end{align} 
where $\Tr_B(\ketbra{\psi_x}{\psi_x}_{AB}) = \rho_x^A$. The state $\rho_{AE'}$ represents the ensemble of quantum state $\xi_{AE}^{M_E} = \{q_x, \rho_x^A \}$ 
prepared according to the random variable $X = \{ q_x\}_x$. 
In this way calculating $\alpha$-QJSD for the ensemble $\xi_{AE}^{M_E}$:
\begin{equation}
Q_{\alpha}(\xi_{AE}^{\M{M}_E}) = S_{\alpha}(\rho_A) - \sum_x q_x S_{\alpha}(\Tr_B(\ketbra{\psi_x}{\psi_x}_{AB}). 
\end{equation}
The quantum ensemble $\xi_{AE}^{M_E}$ is created by means of a measurement $\M{M}_E$ on the subsystem $E$, implying that it is possible to find a measurement  
such that the created ensemble maximizes the $\alpha$-QJSD: 
\begin{equation}\nonumber
\sup_{\xi_{AE}^{M_E}} Q_{\alpha}(\xi_{AE}^{\M{M}_E})= S_{\alpha}(\rho_A) - \inf_{\xi_{AE}^{M_E}}\sum_x q_x S_{\alpha}(\Tr_B(\ketbra{\psi_x}{\psi_x}_{AB}).
\end{equation}
However the ensemble is created by means of a measurement performed on $E$, thus one can rewrite the last term of the equation as
\begin{align}
&\inf_{\xi_{AE}^{M_E}}\sum_x q_x S_{\alpha}(\Tr_B(\ketbra{\psi_x}{\psi_x}_{AB})\\ &= \inf_{\{q_x,\ketbra{\psi_x}{\psi_x}\}_x}\sum_x q_x S_{\alpha}(\Tr_B(\ketbra{\psi_x}{\psi_x}_{AB}).
\end{align}
As the state in $AB$, on average, does not change by the measurement on $E$, one can identify the right hand side of the equation as the 
$\alpha$-Renyi Entanglement of Formation: 
\[ E_f^{\alpha}(\rho_{AB}) = \inf_{\{q_x,\ketbra{\psi_x}{\psi_x}\}_x}\sum_x q_x S_{\alpha}(\Tr_B(\ketbra{\psi_x}{\psi_x}_{AB}),\]
for $\rho_{AB} = \sum_x q_x \ketbra{\psi_x}{\psi_x}_{AB}$.
\end{proof}
From Eq.\eqref{t1} it is possible to recover and generalize that, for pure states the entanglement of formation is equal to the 
von Neumann entropy of the reduced density matrix \cite{sanders2010}.
\begin{teo}[Pure States]\label{prop1}
Consider $\rho_{AB}$ a pure state, the $\alpha$ - QJSD of the ensemble $\xi_{AE}^{\M{M}_E}$ is zero, therefore: 
\begin{equation}\label{ps}
E_f^{\alpha}(\rho_{AB}) = S_{\alpha}(\rho_A).
\end{equation}
\end{teo}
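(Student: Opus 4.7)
The plan is to derive this as an immediate corollary of Theorem \ref{main} by specializing to the degenerate case where the purifying system $E$ is trivial. Since $\rho_{AB}$ is pure, its purification requires only $\dim \B{C}_E = \text{rank}(\rho_{AB}) = 1$, so I would write $\psi_{ABE} = \ket{\psi}_{AB}\otimes\ket{e}_E$. In this case the reduced state is $\rho_{AE} = \rho_A\otimes\ketbra{e}{e}_E$, which is a product state.

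Next, I would evaluate $Q_{\alpha}(\xi_{AE}^{M_E})$ for any POVM $\{M_x\}_x$ acting on $E$. Because the state is a product, the post-measurement conditional state on $A$ is $\rho_A$ regardless of the outcome: we obtain
\begin{equation}
\xi_{AE}^{M_E} = \{q_x, \rho_A\}_x, \qquad q_x = \Tr(M_x \ketbra{e}{e}_E).
\end{equation}
Since every element of the ensemble is the same state $\rho_A$, concavity of $S_\alpha$ is saturated, and
\begin{equation}
Q_{\alpha}(\xi_{AE}^{M_E}) = S_{\alpha}(\rho_A) - \sum_x q_x S_{\alpha}(\rho_A) = 0.
\end{equation}
Taking the supremum over POVMs on $E$ still gives $C_{\alpha}(\rho_{AE}) = 0$.

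Finally, applying Theorem \ref{main} to $\psi_{ABE}$ yields $0 = C_{\alpha}(\rho_{AE}) = S_{\alpha}(\rho_A) - E_f^{\alpha}(\rho_{AB})$, from which Eq.\eqref{ps} follows. There is essentially no hard step here; the only point that deserves care is justifying that a pure $\rho_{AB}$ admits a trivial purification (so that the hypothesis of Theorem \ref{main} is vacuously satisfied with $|E|=1$). As a sanity check, one can equivalently argue directly from the definition: the only convex decomposition of a pure $\ketbra{\psi}{\psi}_{AB}$ into pure states is itself (up to phases), so the infimum in the definition of $E_f^{\alpha}$ is attained at the trivial ensemble and equals $S_{\alpha}(\rho_A)$, confirming the same conclusion.
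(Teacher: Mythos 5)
Your argument is correct and follows essentially the same route as the paper: both use the trivial purification $\ket{\psi}_{AB}\otimes\ket{e}_E$, observe that every local measurement on the one-dimensional ancilla $E$ yields the single-element ensemble $\{1,\rho_A\}$ so the $\alpha$-QJSD vanishes, and then read off Eq.\eqref{ps} from Theorem \ref{main}. Your closing remark, that a pure state admits only the trivial pure-state decomposition so the convex roof collapses directly, is a valid independent confirmation but not needed.
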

\begin{proof}
The state $\rho_{AB}$ can be written in the Schmidt decomposition: 
\begin{equation}\label{sd}
\ket{\psi_{AB}} = \sum_l c_l \ket{a_l}\ket{b_l}. 
\end{equation}
The purification of a pure state is just coupling  another pure ancilla to it: 
\begin{equation}
\rho_{ABE} = \ketbra{\psi_{AB}}{\psi_{AB}}\otimes\ketbra{0}{0}. 
\end{equation}
Performing a measurement $\M{M}_E$ with POVM elements $\{ M_x^E\}_x$ on system $E$, the post-measurement states on subsystem $A$ are: 
\[\rho_x^A = \frac{1}{p_x} \Tr_E\left(\id_A\otimes M_x^E \rho_{AE} \right) = \frac{1}{p_x}  \sum_l c_l \bra{0} M_x \ket{0} \ketbra{a_l}{a_l},\] 
taking the partial trace over $B$ on Eq.\eqref{sd} one realizes that: 
\[  \rho_x^A = \frac{1}{p_x} \bra{0} M_x \ket{0} \rho_{A} = \rho_A ,\]
for every measurement map performed on $E$. 
Therefore the ensemble created by the local measurement is composed of only one single state $\xi_{AE}^{M_E} = \{1, \rho_A\}$, implying that Renyi\apos s QJSD of the ensemble is zero. 
\end{proof}

The Renyi\apos s entropy generalization of the Koashi - Winter relation state that there is 
an interplay between the most distinguishable states 
of the ensemble created by the local measurement 
on the bipartite system, and the $\alpha$ EoF of the unmeasured system and the purification ancillary system. 
The standard KW relation 
relates classical correlations and the entanglement of formation of the unmeasured state and 
the purification ancillary system. 
Note that the function $C_{\alpha}(\rho_{AE})=\sup_{\xi_{AE}^{M_E}} Q_{\alpha}(\xi_{AE}^{\M{M}_E})$ can be a quantifier of 
classical correlations.  
As discussed by Henderson and Vedral \cite{henderson}, the standard measure of classical correlations quantifies the amount 
of information accessed via local 
measurements on a bipartite system, that is equal to the distinguishability of the states of the ensemble created by
the local measurement, quantified by the QJSD. 
In this way, the properties that $C_{\alpha}$ must satisfies to be a measure of classical correlations are now discussed, and some analytical results are obtained from 
this discussion.

It is possible to rewrite Eq.\eqref{t1} changing the order of the 
labels $B\rightarrow E$:
\begin{equation}
C_{\alpha}(\rho_{AB}) = S_{\alpha}(\rho_A) - E_{f}^{\alpha}(\rho_{AE}). 
\end{equation}
Then now the properties of the function $C_{\alpha}(\rho_{AB})$ are presented, for a given density operator $\rho_{AB} \in \M{D}(\B{C}_A\otimes\B{C}_B)$, 
and state that it can be a quantifier of classical correlations of $\alpha \in (0,1)$.

For a function of information to quantify correlations between quantum systems it must satisfy some important properties \cite{brodutch12}. 
\begin{enumerate}
\item\label{p1} $C_{\alpha}(\rho_{AB}) =0$ if and only if $\rho_{AB}$ is a product state;
\item\label{p2} $C_{\alpha}(\rho_{AB})=C_{\alpha}(U_A\otimes U_B\rho_{AB}U_A^{\dagger}\otimes U_B^{\dagger})$, for $U_X\in\M{U}(\B{C}_X)$  
a unitary operation.
\item\label{p3} $C_{\alpha}(\rho_{AB})\geq C_{\alpha}({\Phi}_A\otimes{\Phi}_B(\rho_{AB}))$, for ${\Phi}_X$ a CPTP map. 
\end{enumerate}
The proof that $C_{\alpha}(\rho_{AB})$ satisfies these properties is performed in the sequence by the following theorems. 

\begin{teo}[Property \ref{p1}]
\label{tcc}
Consider a state $\rho_{AB}$ and its post local measurement state $\rho_{AB'} = \sum_{x}p_x \rho_{A} \otimes\ketbra{x}{x}_{B'}$, for the ensemble 
$\xi_{AB}^{M_B} = \{ p_x, \rho_x^A \}$ the Renyi QJSD, maximized over all ensembles created by the local measurement, is zero if and only if $\rho_{AB}$  
is a product state.
\end{teo}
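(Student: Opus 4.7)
The plan is to prove the two implications separately, handling the easy direction first. Assume $\rho_{AB}=\rho_A\otimes\rho_B$. For any POVM $\{M_x\}$ on $B$, a direct computation gives
\begin{equation*}
p_x\rho_x^A = \Tr_B\!\left[(\id_A\otimes M_x)(\rho_A\otimes\rho_B)\right] = \Tr(\rho_B M_x)\,\rho_A = p_x\rho_A,
\end{equation*}
so every conditional state in the ensemble $\xi_{AB}^{M_B}$ coincides with $\rho_A$. Hence $Q_\alpha(\xi_{AB}^{M_B}) = S_\alpha(\rho_A) - \sum_x p_x S_\alpha(\rho_A) = 0$ for every local POVM, and taking the supremum yields $C_\alpha(\rho_{AB})=0$.

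For the converse, suppose $C_\alpha(\rho_{AB})=0$. Because $Q_\alpha(\xi)\geq 0$ by the concavity of $S_\alpha$ stated in Proposition \ref{propconc}, the vanishing of the supremum forces $Q_\alpha(\xi_{AB}^{M_B})=0$ for every local POVM on $B$. I would then invoke the strict concavity of $S_\alpha$ on $\M{D}(\B{C}_A)$ for $\alpha\in(0,1)$, which is inherited from the strict concavity of the scalar function $t\mapsto t^\alpha$ on $\mathbb{R}_+$ via the operator map $\rho\mapsto\Tr(\rho^\alpha)$ composed with the strictly increasing logarithm. Equality in the associated Jensen inequality then forces every post-measurement state $\rho_x^A$ with $p_x>0$ to coincide with the barycentre $\rho_A = \sum_x p_x\rho_x^A$.

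To promote this to a tensor-product structure, I would apply the conclusion to POVMs built around an arbitrary unit vector $\ket{b}\in\B{C}_B$ (for instance, completing $\ket{b}$ to a basis and measuring in it). The condition $\rho_x^A=\rho_A$ rewrites as $(\id_A\otimes\bra{b})\rho_{AB}(\id_A\otimes\ket{b}) = \bra{b}\rho_B\ket{b}\,\rho_A$ for every $\ket{b}$. Writing $\rho_{AB}=\sum_{ij}\sigma_{ij}\otimes\ketbra{i}{j}$ in a fixed basis of $B$ and testing the identity with $\ket{b}=\ket{i}$, $\ket{i}+\ket{j}$, and $\ket{i}+i\ket{j}$, a standard polarization argument yields $\sigma_{ij}=\Tr(\sigma_{ij})\rho_A$ for every pair $i,j$. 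Summing reconstructs $\rho_{AB}=\rho_A\otimes\rho_B$.

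The main obstacle is the quantum strict-concavity step: one has to justify that equality in the operator Jensen inequality for $S_\alpha$ genuinely forces the conditional states $\rho_x^A$ to coincide as operators, not merely to share spectra. Once that is in place, the polarization step is routine and the forward direction is essentially computational.
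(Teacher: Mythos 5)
Your proposal is correct, but it covers strictly more ground than the paper's own proof, which --- despite the ``if and only if'' in the statement --- only establishes the direction ``product state $\Rightarrow C_{\alpha}=0$'', and does so twice: once via the purification together with Theorem \ref{prop1}, and once via the direct observation that a local measurement on $\rho_A\otimes\rho_B$ produces the single-element ensemble $\{1,\rho_A\}$. Your forward direction is essentially the paper's second computation. The converse, which the paper never actually proves, is supplied by your strict-concavity-plus-polarization argument, and it is sound: the one step you flag as an obstacle is real but standard. For a strictly concave scalar function $f$ the trace functional $A\mapsto\Tr f(A)$ is strictly concave on Hermitian matrices: equality in the Peierls inequality $\sum_j f(\exval{\phi_j}{A}{\phi_j})\geq\Tr f(A)$ forces the $\ket{\phi_j}$ to be eigenvectors of $A$, so equality at the midpoint forces the two arguments to be simultaneously diagonal with equal diagonal entries, hence equal as operators. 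Applying this with $f(t)=t^{\alpha}$, $\alpha\in(0,1)$, and then composing with the strictly increasing, strictly concave map $\tfrac{1}{1-\alpha}\log(\cdot)$, equality in $S_{\alpha}(\sum_x p_x\rho_x^A)=\sum_x p_x S_{\alpha}(\rho_x^A)$ does force $\rho_x^A=\rho_A$ as operators for every $x$ with $p_x>0$, not merely spectrally. Your polarization step then correctly upgrades the family of identities $(\id_A\otimes\bra{b})\rho_{AB}(\id_A\otimes\ket{b})=\exval{b}{\rho_B}{b}\,\rho_A$ to $\rho_{AB}=\rho_A\otimes\rho_B$; the vectors with $\exval{b}{\rho_B}{b}=0$ cause no trouble since both sides then vanish, and the identity is homogeneous so normalization of $\ket{b}$ is irrelevant. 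In short, your proposal repairs a gap in the paper rather than containing one; to make it airtight you need only write out (or cite) the strict-concavity statement for trace functions.
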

\begin{proof}
Given $\rho_{AB}=\rho_A\otimes\rho_B$ its purified state will also be a product state:
\begin{align*}
\ket{\psi}_{ABER} &= \ket{\phi}_{AE}\otimes\ket{\varphi}_{BR}\\ 
&= \left(\sum_{l}\sqrt{a_l}\ket{a_l}_A\ket{l}_E\right)\otimes\left(\sum_k\sqrt{k}\ket{b_k}_B\ket{k}_R\right),
\end{align*} 
for $\rho_{A}=\sum_{l}{a_l}\ketbra{a_l}{a_l}$ and $\rho_{B}=\sum_{k}{b_k}\ketbra{b_k}{b_k}$. As shown in Proposition \ref{prop1}, $\alpha$ - entanglement of formation for 
pure state is equal to the Renyi\apos s entropy of the reduced density matrix: 
\begin{equation}
E_f^{\alpha} (\ket{\phi}_{AE}) = S_{\alpha}(\rho_A),
\end{equation}
therefore the $\alpha$-QJSD is zero. 
On the other hand, the ensemble $\xi_{AB}^{M_B}$ is created by means of a local measurement $\M{M}$ over $B$ on the product state $\rho_{AB} = \rho_A\otimes\rho_B$:  
\[ \rho_{AB'} = \rho_A\otimes\M{M}(\rho_B). \] 
For every measurement performed over the subsystem $B$ the state in $A$ remains  undisturbed:
\[ \rho_A\otimes\M{M}(\rho_B)=\rho_A\otimes\rho_{B'}.  \]
Therefore the ensemble created in $A$ by means of this measurement over $B$ has just one element $\xi_{AB}^{M_B} = \{ 1, \rho_A \}$, which implies:
\begin{equation}
Q_{\alpha}(\xi_{AB}^{M_B}) = 0.
\end{equation}
\end{proof}

\begin{teo}[Property \ref{p2}]\label{t.2}
$C_{\alpha}(\rho_{AB})$ is invariant under local unitary operations. 
\end{teo}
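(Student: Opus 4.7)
The plan is to exhibit a bijection between POVMs on $B$ that pairs each ensemble generated from $\rho_{AB}$ with an ensemble generated from $\tilde{\rho}_{AB} = (U_A \otimes U_B)\rho_{AB}(U_A^{\dagger} \otimes U_B^{\dagger})$ in a way that preserves $Q_{\alpha}$. Once such a bijection is in place, the suprema defining $C_\alpha$ must coincide.

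First I would define, for any POVM $\{M_x\}$ on $B$, the transported POVM $\tilde{M}_x = U_B M_x U_B^{\dagger}$. Since $U_B$ is unitary, positivity and the completeness relation $\sum_x \tilde{M}_x = \id_B$ are immediate, and the map $M_x \mapsto \tilde{M}_x$ is invertible with inverse $\tilde{M}_x \mapsto U_B^{\dagger} \tilde{M}_x U_B$. Next I would compute the ensemble produced by $\{\tilde{M}_x\}$ on $\tilde{\rho}_{AB}$. Using the cyclicity of the partial trace with respect to operators supported on $B$,
\begin{align*}
\tilde{p}_x\, \tilde{\rho}_x^{A} &= \Tr_B\bigl[(\id_A \otimes \tilde{M}_x)\, \tilde{\rho}_{AB}\bigr] \\
&= U_A\, \Tr_B\bigl[(\id_A \otimes M_x)\, \rho_{AB}\bigr]\, U_A^{\dagger} \\
&= U_A\, (p_x\, \rho_x^{A})\, U_A^{\dagger},
\end{align*}
so $\tilde{p}_x = p_x$ and $\tilde{\rho}_x^{A} = U_A\, \rho_x^{A}\, U_A^{\dagger}$. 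In particular the averaged state satisfies $\tilde{\rho}_A = U_A\, \rho_A\, U_A^{\dagger}$.

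Since the Schatten norms depend only on the singular values, $\N{U_A \sigma U_A^{\dagger}}_{\alpha} = \N{\sigma}_{\alpha}$ for every density operator $\sigma$, and consequently $S_{\alpha}(U_A \sigma U_A^{\dagger}) = S_{\alpha}(\sigma)$. Applying this to $\tilde{\rho}_A$ and to each $\tilde{\rho}_x^{A}$ yields
\begin{equation*}
Q_{\alpha}(\tilde{\xi}_{AB}^{\tilde{M}_B}) = S_{\alpha}(\tilde{\rho}_A) - \sum_x \tilde{p}_x\, S_{\alpha}(\tilde{\rho}_x^{A}) = Q_{\alpha}(\xi_{AB}^{M_B}).
\end{equation*}
Because the map $M_x \mapsto \tilde{M}_x$ is a bijection on the set of POVMs on $B$, the set of values $\{Q_{\alpha}(\xi_{AB}^{M_B})\}$ ranged over all admissible POVMs on $\rho_{AB}$ coincides with the corresponding set for $\tilde{\rho}_{AB}$. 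Taking the supremum on both sides gives $C_{\alpha}(\rho_{AB}) = C_{\alpha}(\tilde{\rho}_{AB})$, which is the claim.

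There is no real obstacle here beyond bookkeeping: the only substantive ingredient is the unitary invariance of the Schatten norm, which is inherited by $S_{\alpha}$ and therefore by $Q_{\alpha}$. The only point requiring a small amount of care is verifying the bijection on POVMs (in both directions) and that the induced ensembles have the claimed form, so that the supremum is manifestly the same rather than merely bounded on one side.
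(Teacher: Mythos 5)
Your proof is correct and rests on the same key ingredient as the paper's, namely the unitary invariance of the Schatten norm and hence of $S_{\alpha}$ and $Q_{\alpha}$. You are in fact somewhat more careful than the paper: the published proof only checks that $Q_{\alpha}$ is unchanged when an ensemble is conjugated by a unitary (the $U_A$ part), whereas your explicit bijection $M_x \mapsto U_B M_x U_B^{\dagger}$ on POVMs also verifies that the set of achievable ensembles is the same for $\rho_{AB}$ and $(U_A\otimes U_B)\rho_{AB}(U_A^{\dagger}\otimes U_B^{\dagger})$, which is the step the paper leaves implicit.
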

\begin{proof}
As Schatten-$p$ norm is invariant under unitary operations, then $\alpha$- QJSD is  invariant under unitary operations: 
\begin{align*}
Q(U \xi U^{\dagger})_{\alpha} &= S_{\alpha}(U\left(
\sum_k p_k \rho_k\right)U^{\dagger}) - \sum_k p_k S_{\alpha}(U\rho_kU^{\dagger})\\ 
& = Q( \xi)_{\alpha},
\end{align*}
where $U \in \M{U}(\B{C}_{\Gamma})$ is a unitary operation. 
As it is holds for every ensemble $\xi = \{p_k,\rho_k \}_{k=1}^{M}$.
\end{proof}

\begin{teo}[Property \ref{p3}]\label{t.3}
For a bipartite state $\rho_{AB}$, the function 
\[C_{\alpha}(\rho_{AB}) \geq  C_{\alpha}(\tilde{\rho}_{AB}), \]
for $\alpha\in(0,1),$ 
where $\tilde{\rho}_{AB} = \Phi_A\otimes\Phi_B(\rho_{AB})$, and 
$\Phi_X\in\M{P}(\B{C}_X)$. 
\end{teo}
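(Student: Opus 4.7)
My plan is to factor the joint channel $\Phi_A \otimes \Phi_B$ into its two local pieces and prove monotonicity step by step, that is, show
\[
C_{\alpha}(\rho_{AB}) \;\geq\; C_{\alpha}((\id_A\otimes\Phi_B)(\rho_{AB})) \;\geq\; C_{\alpha}((\Phi_A\otimes\Phi_B)(\rho_{AB})).
\]
Handling the two channels separately is natural because $\Phi_B$ acts on the side over which the sup in the definition of $C_\alpha$ is optimized, whereas $\Phi_A$ acts on the states that form the ensemble. This naturally calls for two different techniques.

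For the inequality involving $\Phi_B$, I would argue via a Heisenberg-picture (dual-channel) reduction. Let $\{M_x\}$ be an arbitrary POVM whose action on $\Phi_B(\rho_B)$ produces the post-measurement ensemble $\xi_{AB'}^{M_{B'}}=\{p_x,\rho_x^A\}$. Since $\Phi_B^{\dagger}$ is completely positive and unital, the family $\{\Phi_B^{\dagger}(M_x)\}$ is a valid POVM on $B$, and it reproduces the same probabilities and the same conditional states $\rho_x^A$ on $A$ when applied to the original $\rho_{AB}$. Hence every ensemble achievable after $\Phi_B$ is already achievable before, and taking the sup over measurements yields $C_\alpha(\rho_{AB}) \geq C_\alpha((\id_A\otimes\Phi_B)(\rho_{AB}))$.

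For the inequality involving $\Phi_A$, I would fix an arbitrary local measurement on $B$, producing the ensemble $\xi=\{p_x,\rho_x^A\}$; then applying $\Phi_A$ to each state gives the new ensemble $\tilde{\xi}=\{p_x,\Phi_A(\rho_x^A)\}$. It suffices to show $Q_{\alpha}(\tilde{\xi})\leq Q_{\alpha}(\xi)$ and take the sup. I would then invoke the Stinespring representation $\Phi_A(\sigma)=\Tr_{A^{\prime}}(V\sigma V^{\dagger})$ for an isometry $V\in\M{U}(\B{C}_A,\B{C}_{AA^\prime})$. Because the Schatten norms, and therefore $S_\alpha$, are invariant under isometries (as discussed in Section \ref{snorm}), the step by $V$ leaves $Q_\alpha$ unchanged, and the problem reduces to showing that $Q_\alpha$ is non-increasing under a partial trace applied to every state of the ensemble.

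The main obstacle is precisely this last step: proving that
\[
Q_{\alpha}\de{\DE{p_x,\Tr_{A^{\prime}}\sigma_x^{AA^{\prime}}}} \;\leq\; Q_{\alpha}\de{\DE{p_x,\sigma_x^{AA^{\prime}}}}
\]
for $\alpha\in(0,1)$. For $\alpha=1$ this is the standard Holevo/data-processing inequality, but for general $\alpha$ it is not a direct consequence of the concavity stated in Proposition \ref{propconc}. My preferred route is to rewrite $Q_\alpha(\xi)$ as a combination of Renyi relative entropies $D_\alpha$ against the cq-state $\sum_x p_x \ketbra{x}{x}\otimes\rho_x^A$ (or an analogous reference state), and then appeal to the data-processing inequality for $D_\alpha$ with $\alpha\in(0,1)$. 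If a cleaner concavity-based argument fails, a Lieb-style joint-concavity bound on $\Tr(\sigma^\alpha)$ applied to $\sum_x p_x \sigma_x$ can be used to obtain the needed inequality directly in terms of Schatten norms, which is the form used throughout the paper.
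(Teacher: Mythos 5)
Your factorization of $\Phi_A\otimes\Phi_B$ into two local steps is sensible, and the $\Phi_B$ half is correct and complete: the dual-channel observation that $\{\Phi_B^{\dagger}(M_x)\}$ is again a POVM reproducing the same ensemble $\{p_x,\rho_x^A\}$ on the original state shows that the feasible set of ensembles only shrinks, so the supremum can only decrease. The genuine gap is in the $\Phi_A$ half. After the (correct) Stinespring reduction, everything hinges on the claim that $Q_{\alpha}$ is non-increasing when a partial trace is applied to every member of the ensemble, and neither of your two proposed routes delivers this. The Holevo-type identity $\chi(\xi)=\sum_x p_x D(\rho_x\Vert\bar{\rho})$ is special to $\alpha=1$: for $\alpha\neq 1$ the difference $S_{\alpha}(\sum_x p_x\rho_x)-\sum_x p_x S_{\alpha}(\rho_x)$ is \emph{not} equal to any average of Renyi relative entropies $D_{\alpha}(\rho_x\Vert\bar{\rho})$ nor to $D_{\alpha}(\rho_{XA}\Vert\rho_X\otimes\rho_A)$ for the cq-state, so data processing for $D_{\alpha}$ cannot be invoked directly. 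Likewise, concavity of $\Tr(\sigma^{\alpha})$ (Proposition \ref{propconc}) gives only positivity of $Q_{\alpha}$, not its monotonicity under partial trace; the paper's own Remark about the failure of monotonicity for the averaged conditional Renyi entropy $H_{\alpha}(X|Y)$ should be read as a warning that this step is delicate and cannot be waved through. As it stands, the $\Phi_A$ inequality is asserted, not proved.

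The paper takes a different route that sidesteps this difficulty: it uses the generalized Koashi--Winter identity of Theorem \ref{main}, $C_{\alpha}(\rho_{AB})=S_{\alpha}(\rho_A)-E_f^{\alpha}(\rho_{AE})$, and appeals to the fact that $E_f^{\alpha}$ is an entanglement monotone under LOCC for $\alpha\in(0,1)$ (via Schur concavity, following Vidal and Kim--Sanders). A local channel is realized by an isometry into system plus ancilla followed by discarding the ancilla, which acts as an LOCC operation on the relevant bipartition, so the hard analytic content is offloaded onto a known monotonicity result rather than onto a data-processing inequality for $Q_{\alpha}$ itself. If you want to complete your more direct argument, you would need to either prove the partial-trace monotonicity of $Q_{\alpha}$ for $\alpha\in(0,1)$ as a standalone lemma, or reroute the $\Phi_A$ step through the entanglement-monotone machinery as the paper does.
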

\begin{proof}
This comes from the fact that $\alpha$ - EoF is an entanglement monotone, 
and decreases under LOCC, for $\alpha\in(0,1)$.
\end{proof}

An interesting analytical result obtained from Eq.\eqref{t1} is  that $C_{\alpha}(\rho_{AB})$ 
is equal to the entropy of entanglement for $\rho_{AB}$ a pure state.
\begin{teo}[Classical correlations of a pure state]\label{ps.cc}
For $\rhoAB$ a pure state $\rho_{AB} = \ketbra{\psi}{\psi}_{AB}$, 
it holds 
\begin{equation}
C_{\alpha}(\rho_{AB}) = S(\rho_A),
\end{equation}
where $\rho_A = \Tr_B(\ketbra{\psi}{\psi}_{AB})$. 
\end{teo}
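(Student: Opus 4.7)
The plan is to reduce this to the main theorem (Theorem \ref{main}) after a simple relabeling of subsystems. Since Theorem \ref{main} is proved as $C_{\alpha}(\rho_{AE}) = S_{\alpha}(\rho_A) - E_f^{\alpha}(\rho_{AB})$ for any purification $\psi_{ABE}$, swapping the labels $B \leftrightarrow E$ gives the companion identity $C_{\alpha}(\rho_{AB}) = S_{\alpha}(\rho_A) - E_f^{\alpha}(\rho_{AE})$, which is already displayed in the discussion preceding the theorem. Thus the whole problem collapses to showing that $E_f^{\alpha}(\rho_{AE}) = 0$ when $\rho_{AB}$ is pure.

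First I would build an explicit purification. Since $\rho_{AB} = \ketbra{\psi}{\psi}_{AB}$ is already pure, I can take the trivial purification $\ket{\Psi}_{ABE} = \ket{\psi}_{AB}\otimes\ket{0}_E$, so that $\rho_{AE} = \Tr_B(\ketbra{\Psi}{\Psi}_{ABE}) = \rho_A \otimes \ketbra{0}{0}_E$ is a product state. Then I would diagonalize $\rho_A = \sum_l a_l \ketbra{a_l}{a_l}$ and use the decomposition $\rho_{AE} = \sum_l a_l \ketbra{a_l,0}{a_l,0}_{AE}$ as a candidate in the infimum defining $E_f^{\alpha}(\rho_{AE})$. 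Each member $\ket{a_l,0}_{AE}$ is already a product pure state, so its marginal on $A$ is the rank-one projector $\ketbra{a_l}{a_l}$, whose $\alpha$-Renyi entropy vanishes. The infimum is therefore bounded above by zero, and since $S_{\alpha}\geq 0$ on states it must equal zero, giving $E_f^{\alpha}(\rho_{AE}) = 0$.

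Plugging this back into the relabeled Koashi--Winter identity yields $C_{\alpha}(\rho_{AB}) = S_{\alpha}(\rho_A)$, which is the claimed equality (the $S$ in the statement is to be read as $S_{\alpha}$, matching the companion Theorem \ref{prop1}). There is no substantive obstacle here: everything follows from Theorem \ref{main} once the trivial purification and the product-state decomposition are introduced. The only thing to be careful about is to justify that the candidate decomposition is admissible (the weights $a_l$ are nonnegative and sum to one because they are eigenvalues of a density matrix), and to note that the argument did not depend on the specific purification chosen, as any two purifications differ by an isometry on the purifying system and $E_f^{\alpha}$ is invariant under such an isometry by the same argument used for Theorem \ref{t.2}.
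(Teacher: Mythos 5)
Your proposal is correct and follows essentially the same route as the paper: take the trivial purification $\ket{\psi}_{AB}\otimes\ket{0}_E$, observe that $\rho_{AE}$ is then a product state so $E_f^{\alpha}(\rho_{AE})=0$, and substitute into the relabeled identity $C_{\alpha}(\rho_{AB}) = S_{\alpha}(\rho_A) - E_f^{\alpha}(\rho_{AE})$. Your version merely spells out the explicit product-state decomposition witnessing $E_f^{\alpha}(\rho_{AE})=0$ and the purification-independence, details the paper leaves implicit.
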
 

\begin{proof}
As $\rho_{AB} = \ketbra{\psi}{\psi}_{AB}$ is a pure state, 
its purification $\ket{\phi}_{ABE} = \ket{\psi}_{AB}\otimes\ket{0}_E$ is a product state in the space 
$\B{C}_{AB}\otimes\B{C}_E$, then $E_{f}^{\alpha}(\rho_{AE})=0$,  
proving the statement. 
\end{proof}

For quantum states without quantum correlations it is expected that the amount of classical correlations is 
equal to a standard classical entropy. This is obtained in the next theorem. 

\begin{teo}[Classically Correlated State]\label{t.clascor}
Considering $\rho_{AB}$ a classical correlated state: 
\[ \rho_{AB} = \sum_{x,y} p_{x,y} \ketbra{a_x}{a_x}\otimes\ketbra{b_y}{b_y}, \]
where $\{\ket{a_x}\}_{x=1}^{|A|}$ and $\{\ket{b_y}\}_{y=1}^{|B|}$ are 
orthonormal basis in $\B{C}_A$ and $\B{C}_B$ respectively, then: 
\[ C_{\alpha}(\rho_{AB}) = H_{\alpha}(X,Y) - H_{\alpha}(X|Y), \]
for $X = \{ p_x = \sum_{y}p_{x,y}\}_{x=1}^{|A|}$, and analogous for $Y$. 
\end{teo}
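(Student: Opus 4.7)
The plan is to identify the optimal local POVM on $B$ for the classically correlated state $\rho_{AB} = \sum_{x,y} p_{x,y}\ketbra{a_x}{a_x}\otimes\ketbra{b_y}{b_y}$, compute $Q_\alpha$ at the optimum, and match the result to the stated entropic expression.

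First, note that $\rho_A = \Tr_B(\rho_{AB}) = \sum_x p_x \ketbra{a_x}{a_x}$ with $p_x = \sum_y p_{x,y}$, so $S_\alpha(\rho_A) = H_\alpha(X)$. Hence for any POVM $\{M_\mu^B\}$ on $B$ producing the ensemble $\xi_{AB}^{M_B} = \{q_\mu,\rho_\mu^A\}$, we have $Q_\alpha(\xi_{AB}^{M_B}) = H_\alpha(X) - \sum_\mu q_\mu S_\alpha(\rho_\mu^A)$. It therefore suffices to show that $\sum_\mu q_\mu S_\alpha(\rho_\mu^A)$ is minimized by projective measurement in the basis $\{\ket{b_y}\}_y$, which yields the ensemble $\{p_y,\sigma_y^A\}$ with $p_y = \sum_x p_{x,y}$ and $\sigma_y^A = \sum_x (p_{x,y}/p_y)\ketbra{a_x}{a_x}$.

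The key step is the optimality argument. For any rank-$1$ POVM element $M_\mu^B = \ketbra{\mu}{\mu}$, a direct calculation shows that $q_\mu \rho_\mu^A = \sum_y p_y |\braket{\mu}{b_y}|^2 \sigma_y^A$, so $\rho_\mu^A$ is a convex combination of the $\sigma_y^A$ with weights $\lambda_{\mu,y} = p_y|\braket{\mu}{b_y}|^2 / q_\mu$. By the concavity of $S_\alpha$ for $\alpha \in (0,1)$ (Proposition \ref{propconc}),
\begin{equation}
\sum_\mu q_\mu S_\alpha(\rho_\mu^A) \geq \sum_\mu q_\mu \sum_y \lambda_{\mu,y} S_\alpha(\sigma_y^A) = \sum_y p_y S_\alpha(\sigma_y^A),
\end{equation}
where the last equality uses $\sum_\mu q_\mu \lambda_{\mu,y} = p_y \sum_\mu |\braket{\mu}{b_y}|^2 = p_y$ (since $\sum_\mu \ketbra{\mu}{\mu} = \id$). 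Extending from rank-$1$ to general POVMs follows from the standard fine-graining argument, and equality is attained by the projective measurement in $\{\ket{b_y}\}_y$.

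Finally, since $\sigma_y^A$ is diagonal in $\{\ket{a_x}\}_x$ with eigenvalues $p_{x,y}/p_y$, we have $S_\alpha(\sigma_y^A) = H_\alpha(X|Y=y)$ and therefore $\sum_y p_y S_\alpha(\sigma_y^A) = H_\alpha(X|Y)$ in the sense of the averaged conditional Renyi entropy used in the theorem. Combining, $C_\alpha(\rho_{AB}) = H_\alpha(X) - H_\alpha(X|Y)$, which one rewrites in the stated form using the chain-rule-like identity for the Renyi entropies of the joint distribution $\{p_{x,y}\}_{x,y}$. The main obstacle in this program is the optimality step; once concavity is exploited as above, everything else reduces to a diagonal computation.
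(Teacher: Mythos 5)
Your core route coincides with the paper's --- evaluate $Q_\alpha$ on the ensemble produced by dephasing $B$ in the basis $\{\ket{b_y}\}_y$, namely $\{p_y,\sigma_y^A\}$ with $\sigma_y^A=\sum_x p(x|y)\ketbra{a_x}{a_x}$ --- but you supply a step the paper omits entirely: a proof that this measurement attains the supremum defining $C_\alpha$. The paper only exhibits the invariant measurement and computes $Q_\alpha$ for the resulting ensemble, with no argument that no other POVM does better. Your observation that every rank-one POVM output satisfies $q_\mu\rho_\mu^A=\sum_y p_y\abs{\braket{\mu}{b_y}}^2\sigma_y^A$, combined with the concavity of $S_\alpha$ for $\alpha\in(0,1)$ (Proposition \ref{propconc}) and the completeness relation, gives $\sum_\mu q_\mu S_\alpha(\rho_\mu^A)\geq\sum_y p_y S_\alpha(\sigma_y^A)$ and closes that gap; the fine-graining reduction to rank-one elements is also in the right direction, since concavity shows refinement can only decrease the average output entropy. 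This is a genuine strengthening of the paper's argument.

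One caveat concerns your last sentence. There is no ``chain-rule-like identity'' that converts $H_\alpha(X)-H_\alpha(X|Y)$ into $H_\alpha(X,Y)-H_\alpha(X|Y)$: for $\alpha\neq 1$ the averaged conditional Renyi entropy used here does not satisfy $H_\alpha(X,Y)=H_\alpha(Y)+H_\alpha(X|Y)$, and $H_\alpha(X,Y)\neq H_\alpha(X)$ in general. What your computation actually establishes is $C_\alpha(\rho_{AB})=S_\alpha(\rho_A)-H_\alpha(X|Y)=H_\alpha(X)-H_\alpha(X|Y)$, since the first term of $Q_\alpha$ is always the entropy of the \emph{marginal} $\rho_A=\sum_x p_x\ketbra{a_x}{a_x}$. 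The paper's own proof evaluates the same quantity $S_\alpha\bigl(\sum_{x,y}p_y p(x|y)\ketbra{a_x}{a_x}\bigr)$ and labels it $H_\alpha(X,Y)$, which is the source of the mismatch with the stated formula. Your derivation is the trustworthy one; rather than invoking a nonexistent identity to force agreement with the statement, you should flag that the first term is $H_\alpha(X)$, not $H_\alpha(X,Y)$.
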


\begin{proof}
Taking the local measurement over partition $B$, there exists a 
measurement operation $\M{M}_B$ that enables the classically correlated state invariant 
\[ \id_A \otimes \M{M}_B(\rho_{AB}) = \rho_{AB}.  \]
The post-measurement ensemble of states is:
\[ \xi_{AB}^{\M{M}_B}=\left\{p_y , \sum_x p(x|y)\ketbra{a_x}{a_x} \right\}_{y=1}^{|B|}, \] 
where $p(x|y) = p_{x,y}/p_y$. Calculating $\alpha$ - QJSD
of  $\xi_{AB}^{\M{M}_B}$: 
\begin{align} Q_{\alpha}(\xi_{AB}^{\M{M}_B})&=
S_{\alpha}(\sum_{x,y} p_{y}p(x|y) \ketbra{a_x}{a_x})- \\& - 
\sum_{y} p_{y} S_{\alpha}(\sum_x p(x|y) \ketbra{a_x}{a_x}), \end{align}
as: 
\begin{align}
S_{\alpha}(\sum_{x,y} p_{y}p(x|y) \ketbra{a_x}{a_x}) = H_{\alpha}(X,Y)\\
 S_{\alpha}(\sum_x p(x|y) \ketbra{a_x}{a_x}) = H_{\alpha}(X|Y),
\end{align}
proving the proposition. 
\end{proof}

\begin{remark} This definition for classical conditional entropy 
\[ H_{\alpha}(X|Y) = \frac{1}{1-\alpha} \sum_y p_y \log[\sum_x p(x|y)^{\alpha}], \]
does not satisfy the monotonicity under stochastic operations for every $\alpha \in (0,1)\cup(0,\infty)$ \cite{matos12,fehr2014}. 
An interesting discussion about this issue can be found in Ref.\cite{tomathesis}, although it is not known if this is not monotone for 
every $\alpha\in (0,1)$. 
\end{remark}

\section{KW - relation for Log Robustness}\label{log.rob}
As an application of the results of this paper, an 
interesting measure of entanglement is the well known {\it generalized robustness}, which  
quantifies the amount of mixture with another state 
needed to destroy the entanglement of the system \cite{vida99,steiner03}.   
Formally this is defined as:
\begin{defi}[Generalized robustness]
Consider an n-partite state $\rho\in\M{D}(\B{C}_{A_1}\otimes\cdots\otimes\B{C}_{A_n})$, 
 generalized robustness of $\rho$ is defined as:
\begin{equation}
R_G(\rho) = \left\{\min_{s\in\B{R}_{+}} s\, : \exists \rho_s\quad s.t.\quad \frac{\rho + s\rho_s}{1+s} \in \text{Sep}\right\},
\end{equation}
where $Sep$ is the set of separable states in $\M{D}(\B{C}_{A_1}\otimes\cdots\otimes\B{C}_{A_n})$.
\end{defi}  
The parameter $s$ is zero for separable states and finite for entangled states \cite{vida99}.

Another entanglement quantifier related to the generalized robustness of entanglement is the {\it log - generalized robustness} (LGR) defined as 
\cite{brandaopra2005}:
\begin{equation}\label{loggr}
LR_g(\rho) = \log_2(1+R_G(\rho)),
\end{equation}
where $R_G(\rho)$ is the generalized robustness of $\rho$. LGR is an entanglement monotone, sub-additive, non increasing 
under trace preserving separable operations, and an upper bound for the distillable entanglement \cite{brandaopra2005}.  It was also 
studied in the context of the resources theory of quantum entanglement \cite{brandao10,bnature,datta09}. 

As an application of the main results in 
Eq.\eqref{t1}, it is possible to obtain that for $\alpha = 1/2$, Renyi\apos s entanglement entropy  
of a pure state is  equal to the LGR for the pure state.

\begin{lemma}
Considering a pure state $\ket{\psi}_{AB}\in\B{C}_A\otimes\B{C}_B$, the 
$\alpha=1/2$ - entanglement entropy is equal to the log - robustness:   
\[ S_{1/2}(\rho_A) = LR_g(\psi_{AB}),\]
where $\rho_A = \Tr_B(\psi_{AB})$ and $\psi_{AB} = \ketbra{\psi}{\psi}_{AB}$. 
\end{lemma}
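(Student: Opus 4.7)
The plan is to reduce both sides of the claimed equality to an explicit function of the Schmidt coefficients of $\ket{\psi}_{AB}$ and check they agree. Writing $\ket{\psi}_{AB} = \sum_i c_i \ket{a_i}\ket{b_i}$ with $c_i \geq 0$ and $\sum_i c_i^2 = 1$, the reduced state $\rho_A$ has spectrum $\{c_i^2\}$, so the left-hand side follows immediately from the definition:
\begin{equation}
S_{1/2}(\rho_A) = \frac{1}{1-1/2}\log \sum_i (c_i^2)^{1/2} = 2\log \Bigl(\sum_i c_i\Bigr).
\end{equation}
So the first step is just to unpack $S_{1/2}$ via Schmidt decomposition.

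The substantive step is to identify the generalized robustness of a pure bipartite state. I would invoke the known result that for a pure state with Schmidt coefficients $\{c_i\}$, one has $R_G(\ketbra{\psi}{\psi}_{AB}) = \bigl(\sum_i c_i\bigr)^2 - 1$. If a self-contained argument is wanted, the feasibility direction is easy: take $s = (\sum_i c_i)^2 - 1$ and exhibit an explicit separable $\rho_s$ (built from the Schmidt basis) such that $(\ketbra{\psi}{\psi}+s\rho_s)/(1+s)$ is separable, for example a mixture whose reduced spectra match $\sum_i c_i \ketbra{a_i}{a_i}$ up to normalization. The matching lower bound on $R_G$ comes from an entanglement witness argument: the operator $W = \id - \ketbra{\phi}{\phi}$ with $\ket{\phi} \propto \sum_i \ket{a_i b_i}$ gives, after appropriate rescaling, a witness certifying $R_G \geq (\sum_i c_i)^2 - 1$ via $\Tr(W\sigma) \geq 0$ for every separable $\sigma$.

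Combining the two halves and using the definition \eqref{loggr},
\begin{equation}
LR_g(\psi_{AB}) = \log_2\bigl(1 + R_G(\psi_{AB})\bigr) = \log_2\Bigl(\sum_i c_i\Bigr)^2 = 2\log_2\Bigl(\sum_i c_i\Bigr),
\end{equation}
which coincides with $S_{1/2}(\rho_A)$ computed above.

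The main obstacle is precisely the pure-state formula for $R_G$; the rest is bookkeeping. The feasibility side of that formula is constructive and routine, so the real work is in justifying optimality, either by citing Vidal--Tarrach (or an equivalent reference used earlier in the paper) or by producing the witness $W$ and checking $\Tr(W\sigma) \geq 0$ on the separable cone, which ultimately reduces to the inequality $|\langle a_i b_i | \phi_1 \rangle\langle \phi_2 | a_j b_j\rangle| \leq 1$ summed against the Schmidt weights. Once that step is granted, the lemma follows by a single line of algebra.
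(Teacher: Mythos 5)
Your proof is correct and follows essentially the same route as the paper's: compute $S_{1/2}(\rho_A) = 2\log\bigl(\sum_i c_i\bigr)$ from the Schmidt spectrum and invoke the Vidal--Tarrach pure-state formula $R_G(\psi_{AB}) = \bigl(\sum_i c_i\bigr)^2 - 1$, then combine via the definition of $LR_g$. The only difference is that you additionally sketch a feasibility-plus-witness argument for the robustness formula, whereas the paper simply cites the reference.
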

\begin{proof}
Considering the pure state in its Schmidt decomposition 
$\ket{\psi}_{AB} = \sum_i \sqrt{\mu_i}\ket{a_i}\ket{b_i}$, then its reduced density matrix is $\rho_A = \sum_i \mu_i \ketbra{a_i}{a_i}$. The $\alpha=1/2$ entropy is simply:
\[ S_{1/2}(\rho_A) = 2 \log\Tr(\sqrt{\rho_A}) = 2 \log(\sum_i \sqrt{\mu_i}). \]
As bipartite pure state the generalized robustness is \cite{vidal99}:
\[ R_G(\psi_{AB}) = \left(\sum_i \sqrt{\mu_i}\right)^2 -1, \]
then by definition of LGR:
\[ S_{1/2}(\rho_A) = LR_g(\psi_{AB}). \] 
\end{proof}
As a direct corollary, it is possible to calculate that the $\alpha=1/2$ - entanglement of 
formation is a convex roof version of the LGR: for a bipartite 
state $\rho_{AB}\in\M{D}(\B{C}_A\otimes\B{C}_B)$
\begin{equation}\label{eoflr}
E_{f}^{1/2}(\rho_{AB}) = \min_{\{p_i,\ket{\psi_i}\}}\sum_i p_iLR_g(\psi_i),
\end{equation}
where $\rho_{AB} = \sum_i p_i  \ketbra{\psi_i}{\psi_i}_{AB}$. 

Before introducing the main theorem of this section, some useful lemmas are proved.
\begin{lemma}
Given an ensemble of quantum states $\xi_A = \{ p_x, \rho_{x}^A\}_{x=1}^{|X|}$, for $ \rho_{x}^A\in\M{D}({C}_A)$ and $|X|$ the cardinality of classical distribution 
$X  = \{ p_x\}$. The probability of success in distinguishing the 
states in the ensemble is defined as:
\[ P_{suc}(X|A) = \sup_{\{E_x\}}\sum_x \Tr(E_x \rho_x^A), \]
where $\{E_x\}_{x=1}^{|X|}$ is a set of POVM elements. 
It is rated by the $\alpha=1/2$ entropy of $\rho_A = \sum_x p_x \rho_x^A$ as:
\begin{equation}\label{psuc}
S_{1/2}(\rho_A) \geq - \log{P_{suc}(X|A)}.
\end{equation}
\end{lemma}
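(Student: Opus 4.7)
The plan is to exhibit an explicit POVM whose success probability dominates $2^{-S_{1/2}(\rho_A)}$; algebraic manipulation together with Cauchy--Schwarz will then close the proof. First, I would recast the claim. Since $S_{1/2}(\rho_A) = 2\log \Tr\sqrt{\rho_A}$, the desired inequality $S_{1/2}(\rho_A) \geq -\log P_{suc}(X|A)$ is equivalent to
\[ P_{suc}(X|A) \geq \frac{1}{(\Tr\sqrt{\rho_A})^{2}}, \]
which is the form I would prove directly.

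The key step is a cleverly simple POVM. Motivated by the appearance of $\Tr\sqrt{\rho_A}$ in the target bound, I would try $E_x = (p_x/\Tr\sqrt{\rho_A})\sqrt{\rho_A}$ for $x = 1,\ldots,|X|$. Each $E_x$ is positive semidefinite, and $\sum_x E_x = \sqrt{\rho_A}/\Tr\sqrt{\rho_A}\leq I$, since the largest eigenvalue of $\sqrt{\rho_A}$ is bounded by its trace. The slack $I - \sqrt{\rho_A}/\Tr\sqrt{\rho_A}$ can be absorbed into, say, $E_{1}$ to obtain a genuine POVM of size $|X|$; since that slack term contributes a nonnegative amount to $\sum_x \Tr(E_x \rho_x^A)$, this redefinition only strengthens the bound. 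Using $\rho_A = \sum_x p_x \rho_x^A$,
\[ \sum_x \Tr(E_x \rho_x^A) = \frac{1}{\Tr\sqrt{\rho_A}}\,\Tr\!\left(\sqrt{\rho_A}\sum_x p_x \rho_x^A\right) = \frac{\Tr(\rho_A^{3/2})}{\Tr\sqrt{\rho_A}}, \]
so $P_{suc}(X|A) \geq \Tr(\rho_A^{3/2})/\Tr\sqrt{\rho_A}$.

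To finish, I would apply Cauchy--Schwarz with respect to the Hilbert--Schmidt inner product to the pair $\rho_A^{3/4}, \rho_A^{1/4}$:
\[ 1 = (\Tr\rho_A)^{2} = \bigl(\Tr(\rho_A^{3/4}\rho_A^{1/4})\bigr)^{2} \leq \Tr(\rho_A^{3/2})\cdot\Tr\sqrt{\rho_A}. \]
Dividing by $(\Tr\sqrt{\rho_A})^{2}$ yields $\Tr(\rho_A^{3/2})/\Tr\sqrt{\rho_A} \geq (\Tr\sqrt{\rho_A})^{-2}$, and chaining with the previous estimate delivers $P_{suc}(X|A) \geq 2^{-S_{1/2}(\rho_A)}$, equivalently $S_{1/2}(\rho_A) \geq -\log P_{suc}(X|A)$. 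The only real obstacle is guessing the POVM: since the bound depends only on the mean $\rho_A$ and not on the individual $\rho_x^A$, it is natural to seek a measurement that depends only on $\rho_A$; once this ansatz is fixed, Cauchy--Schwarz mechanically delivers the rest.
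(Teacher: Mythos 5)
Your proof is correct, and it takes a genuinely different route from the paper. The paper does not construct a measurement at all: it chains three imported facts from the smooth-entropy literature, namely that $S_{1/2}$ is the max-entropy, an inequality between the max-entropy of the classical--quantum state $\rho_{AX}$ and the conditional min-entropy $S_{\min}(X|A)$ (Lemma~3.1.5 of Renner's thesis), and the K\"onig--Renner--Schaffner operational identity $S_{\min}(X|A)=-\log P_{suc}(X|A)$. You instead exhibit the explicit suboptimal POVM $E_x=(p_x/\Tr\sqrt{\rho_A})\sqrt{\rho_A}$ (completed to resolve the identity), compute its success probability as $\Tr(\rho_A^{3/2})/\Tr\sqrt{\rho_A}$, and close with Cauchy--Schwarz applied to $\rho_A^{3/4},\rho_A^{1/4}$; every step checks out, including the normalization $\sqrt{\rho_A}/\Tr\sqrt{\rho_A}\leq\id$ and the observation that absorbing the slack into one element only helps. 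What your approach buys is self-containment and transparency: it needs nothing beyond positivity of the operators and the Hilbert--Schmidt Cauchy--Schwarz inequality, whereas the paper's argument rests on external machinery and, as written, on the preliminary claim $S_{1/2}(\rho_A)\geq S_{1/2}(\rho_{AX})$, which is delicate (appending a classical register generically \emph{increases} the R\'enyi entropy, as the example $\rho_x^A=\rho_A$ for all $x$ with uniform $p_x$ shows), so your elementary derivation is arguably on firmer ground. What the paper's route buys, when made rigorous, is contact with the operational min/max-entropy framework, which is what it needs for the subsequent regularized channel-capacity statement.
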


\begin{proof}
As $\alpha$ entropy monotonically increases for $\alpha \in (0,1)\cup(1,2]$ \cite{petz86}, then:
\[ S_{1/2}(\rho_A) \geq S_{1/2}(\rho_{AX}),\]
where $\rho_{AX}= \sum_x p_x \rho_x^A\otimes\ketbra{x}{x}_X$. 
As discussed in Ref.\cite{tomathesis} $S_{1/2}$ is named $S_{max}$, or $max$ entropy. Therefore:
\begin{align}
 S_{1/2}(\rho_A) &\geq S_{max}(\rho_{AX})\\
\label{l.3.1.5.renner} & \geq S_{min}(X|A) \\
\label{t.1.renner} & = - \log{P_{suc}(X|A)} .
\end{align}
Where $S_{min}(X|A) = \max_{\sigma} \left\{ \N{\sigma_A^{-1/2} \otimes \id \rho_{AX} \sigma_A^{-1/2} \otimes \id}_2 \right\}$ \cite{renner}.  
Eq.\eqref{l.3.1.5.renner} is Lemma 3.1.5 of Ref.\cite{rennerthesis} and 
Eq.\eqref{t.1.renner} is Theorem.1 of Ref.\cite{renner}. 
\end{proof}

\begin{lemma}
Consider $\rhoAB$, the regularized $E_{1/2}^{\infty}(\rho_{AB})$ and $ LR_g^{\infty}(\rho_{AB})$ defined respectively as:
\[  E_{1/2}^{\infty}(\rho_{AB}) = \lim_{n\rightarrow \infty} \frac{E_f^{1/2}(\rho_{AB}^{\otimes n})}{n},\]
\[  LR_g^{\infty}(\rho_{AB}) = \lim_{n\rightarrow \infty} \frac{LR_g(\rho_{AB}^{\otimes n})}{n},\]
 the following equality holds:
\begin{equation}\label{lrgeof} E_{1/2}^{\infty}(\rho_{AB}) =  LR_g^{\infty}(\rho_{AB}).\end{equation}
\end{lemma}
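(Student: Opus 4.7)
The plan is to establish both inequalities of Eq.\eqref{lrgeof} using the convex-roof expression for $E_f^{1/2}$ in Eq.\eqref{eoflr} and the defining operator inequality of the generalized robustness.

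For the direction $LR_g^{\infty}(\rho_{AB}) \le E_{1/2}^{\infty}(\rho_{AB})$, I would fix $n$ and pick an optimal pure-state decomposition $\rho_{AB}^{\otimes n} = \sum_i p_i^{(n)} \ketbra{\psi_i^{(n)}}{\psi_i^{(n)}}$ realizing $E_f^{1/2}(\rho_{AB}^{\otimes n}) = \sum_i p_i^{(n)} LR_g(\psi_i^{(n)})$. By definition of the generalized robustness, for each $i$ there exists a separable state $\sigma_i^{(n)}$ with $\ketbra{\psi_i^{(n)}}{\psi_i^{(n)}} \le 2^{LR_g(\psi_i^{(n)})}\sigma_i^{(n)}$. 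Convex-combining with the weights $p_i^{(n)}$ yields $\rho_{AB}^{\otimes n} \le Z_n\,\bar\sigma_n$ with $\bar\sigma_n$ separable and $Z_n = \sum_i p_i^{(n)} 2^{LR_g(\psi_i^{(n)})}$, so
\begin{equation}
LR_g(\rho_{AB}^{\otimes n}) \;\le\; \log\sum_i p_i^{(n)}\, 2^{LR_g(\psi_i^{(n)})}.
\end{equation}
Because $LR_g(\psi) = S_{1/2}(\Tr_B \psi)$ and $S_{1/2}$ is additive on tensor products, a typicality/AEP argument on the ensemble of reduced states on typical sequences should collapse the $1/n$-rescaling of the right-hand side (a Rényi cumulant) onto the arithmetic mean $(1/n)\sum_i p_i^{(n)} LR_g(\psi_i^{(n)}) = (1/n) E_f^{1/2}(\rho_{AB}^{\otimes n})$ up to $o(1)$, which delivers the inequality in the limit.

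For the reverse inequality $E_{1/2}^{\infty}(\rho_{AB}) \le LR_g^{\infty}(\rho_{AB})$, I would take an optimal separable witness $\bar\sigma_n$ of $LR_g(\rho_{AB}^{\otimes n})$, spectrally decompose it into product pure states, and use the induced decomposition of $\rho_{AB}^{\otimes n}$ (after subtracting the complementary state $\rho_s$ entering the definition of $R_G$) to upper bound $E_f^{1/2}(\rho_{AB}^{\otimes n})$. The $P_{\text{suc}}$ estimate of Eq.\eqref{psuc}, applied to the ensemble generated by a fine-grained measurement of a dilation of $\bar\sigma_n$, provides control of each $S_{1/2}(\Tr_B \psi_i^{(n)}) = LR_g(\psi_i^{(n)})$ by the distinguishability of the ensemble, which in turn is bounded by $LR_g(\rho_{AB}^{\otimes n})$ plus a term that is subleading per copy.

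The main obstacle I expect is this reverse direction. The forward bound via $\log\sum p_i\,2^{LR_g(\psi_i)}$ is elementary and the AEP collapse to the arithmetic mean is standard once one recognises it as a Rényi cumulant averaged on typical sequences. By contrast, converting a global separable witness for $\rho_{AB}^{\otimes n}$ into pure components with individually controlled log-robustness requires asymptotic typicality together with a careful accounting of the $\rho_s$-piece in the definition of $R_G$, whose per-copy contribution to $S_{1/2}$ must be shown to vanish after regularization. The $P_{\text{suc}}$-based lemma preceding this statement is positioned precisely to supply this quantitative control.
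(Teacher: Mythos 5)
Your route is genuinely different from the paper's, and both of its halves have gaps that I do not think can be repaired along the lines you sketch. In the forward direction, the bound $LR_g(\rho_{AB}^{\otimes n})\le\log\sum_i p_i^{(n)}2^{LR_g(\psi_i^{(n)})}$ is correct (mixing the separable witnesses of the pure components works), but the step that ``collapses'' the right-hand side onto the arithmetic mean is exactly where the difficulty lives: by Jensen's inequality $\log\sum_i p_i 2^{x_i}\ge\sum_i p_i x_i$, so the exponential average sits on the \emph{wrong} side of $E_f^{1/2}(\rho_{AB}^{\otimes n})$, and the gap between the two can be of order $n$ unless the values $LR_g(\psi_i^{(n)})$ concentrate. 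There is no reason they should: the optimal convex-roof decomposition of $\rho_{AB}^{\otimes n}$ is not an i.i.d.\ ensemble, so no AEP applies to it. The reverse direction is not yet an argument at all: an optimal separable witness $\bar\sigma_n$ satisfying $\rho_{AB}^{\otimes n}\le Z_n\bar\sigma_n$ does not induce a pure-state decomposition of $\rho_{AB}^{\otimes n}$ (its spectral or product decomposition decomposes $\bar\sigma_n$, not $\rho_{AB}^{\otimes n}$, and subtracting the $\rho_s$ piece destroys positivity of the remainder as a combination of those product states), and the $P_{suc}$ lemma bounds the entropy of the \emph{average} state of an ensemble by a distinguishability quantity --- it gives no control on the individual $S_{1/2}(\Tr_B\psi_i^{(n)})$ of a decomposition you have yet to construct.

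The paper avoids both problems by not proving the two inequalities directly: it chains known asymptotic identities, namely $LR_g^{\infty}=E_R=E_C$ from Brand\~ao and Plenio and $E_C=E_f^{\infty}$ from Hayden, Horodecki and Terhal, together with the observation that regularization collapses the pure-state quantity $S_{1/2}(\rho_A)$ (the single-copy $LR_g$ of a pure state) onto the von Neumann entropy $S(\rho_A)$, so that $E_{1/2}^{\infty}=E_f^{\infty}$. If you want a self-contained proof you would essentially have to reprove those cited results; that is where typicality machinery genuinely enters, and it is applied to decompositions of $\rho_{AB}^{\otimes n}$ built from i.i.d.\ copies of a fixed single-copy decomposition, not to the optimal $n$-copy decomposition you start from.
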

\begin{proof}
Consider the relative entropy of entanglement:
\[ E_R(\rho) = \min_{\sigma\in\text{Sep}} S(\rho||\sigma), \]
where $S(\rho||\sigma) = -\Tr(\sigma\log\rho) - S(\rho)$ is the relative entropy. For pure states $E_R(\psi)= S(\rho_A)$, where $\rho_A$ is 
the reduced density matrix of $\psi = \ketbra{\psi}{\psi}$. 
As demonstrated by Brand\~ao and Plenio \cite{bnature,brandao10}: $ LR_g^{\infty}(\rho_{AB}) = E_R(\rho_{AB}) = E_C(\rho_{AB})$, where $E_C$ is the entanglement cost. 
It implies that for pure states: 
\[  LR_g^{\infty}(\psi_{AB}) =  E_R(\psi_{AB}) = S(\rho_A),\]
which implies that $E^{\infty}_{f}(\rho_{AB}) = E^{\infty}_{1/2}(\rho_{AB})$.
As proved in Ref.\cite{hayden01}: $E_{f}^{\infty}(\rho_{AB}) = E_C({\rho_{AB}})$, where $E_{f}^{\infty}$ is the regularized entanglement of formation. 
Therefore the statement comes from:
\[ LR_g^{\infty}(\rho_{AB})=E_C(\rho_{AB}) = E^{\infty}_{f}(\rho_{AB}) = E^{\infty}_{1/2}(\rho_{AB}). \]
\end{proof}

As aforementioned, the function $Q_{\alpha}(\xi_{AE}^{M_E})$, in analogy with QJSD, quantifies the distinguishability of the 
states in the ensemble. If the ensemble is generated by means of local measurements, its optimization over all local 
measurements quantifies classical correlations in the state $\rho_{AE}$. This concept is related to the channel capacity of a quantum - classical 
channel, where the capacity is rated by the HSW quantity \cite{holevo73,schumacher97}, that is the QJSD of the ensemble created by 
the quantum classical channel \cite{briet2009}. 
The following result provides a relation between the capacity of a quantum classical channel, 
a dephasing channel acting locally in a composed system, and the probability of success in discriminating the states in the output ensemble, depending 
on the entanglement with the purification ancillary system. 
It is considered that pure state $\psi_{ABE}$ is shared, and information, encoding on $E$, is sent from $A$  to 
$B$  by a quantum classical channel. 
This ensemble is created by means of the optimal local measurement over $E$,  
considering that there may be many copies of the state. 
\begin{teo}
Consider a pure state $\rho_{ABE}\in\M{D}(\B{C}_A\otimes\B{C}_B\otimes\B{C}_E)$, 
performing the optimal measurement over $E$ such that 
$C^{\infty}_{1/2}(\rho_{AE}) = \sup_{\xi_{AE}^{M_E}} Q_{\alpha}(\xi_{AE}^{M_E})$ 
and $\xi_{AE}^{M_E}  = \{ p_x, \rho_x^A\}$ is the ensemble 
in $A$ created by the local measurement. $C^{\infty}_{1/2}(\rho_{AE})$ is rated below as: 
\begin{equation}\label{thesao}
C^{\infty}_{1/2}(\rho_{AE}) \geq - \log{P_{suc}(X|A)} - LR_g^{\infty}(\rho_{AB}),
\end{equation}
where $P_{suc}(X|A)$ is the probability of success in discriminating the states in the ensemble $\xi_{AE}^{M_E}$, 
and $ LR_g^{\infty}(\rho_{AB})$ is asymptotic 
log generalized robustness .
\end{teo}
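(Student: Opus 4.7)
The plan is to combine the three ingredients already established in the paper: the main theorem (Theorem \ref{main}) at $\alpha=1/2$, the asymptotic identification of $E_f^{1/2}$ with the log generalized robustness (Eq.\eqref{lrgeof}), and the distinguishability bound on the $1/2$-Renyi entropy (Eq.\eqref{psuc}). The key observation is that all three apply cleanly to tensor-power states, so the inequality should drop out with almost no extra work once the regularization is taken correctly.

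First I apply Theorem \ref{main} at $\alpha = 1/2$ to the $n$-fold pure state $\psi_{ABE}^{\otimes n}$, which is a purification of $\rho_{AB}^{\otimes n}$, giving
\begin{equation*}
C_{1/2}(\rho_{AE}^{\otimes n}) = S_{1/2}(\rho_A^{\otimes n}) - E_f^{1/2}(\rho_{AB}^{\otimes n}).
\end{equation*}
Since $S_{1/2}(\rho)=2\log\Tr\sqrt{\rho}$ is additive under tensor products, $S_{1/2}(\rho_A^{\otimes n}) = n\, S_{1/2}(\rho_A)$. Dividing by $n$ and taking $n\to\infty$ then yields
\begin{equation*}
C^{\infty}_{1/2}(\rho_{AE}) = S_{1/2}(\rho_A) - E_f^{1/2,\infty}(\rho_{AB}).
\end{equation*}

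Next I invoke Eq.\eqref{lrgeof}, which identifies the regularized $\alpha=1/2$ entanglement of formation with the regularized log generalized robustness, so $E_f^{1/2,\infty}(\rho_{AB}) = LR_g^{\infty}(\rho_{AB})$. Finally, applying the bound of Eq.\eqref{psuc} to the state $\rho_A$ and the ensemble $\xi_{AE}^{M_E} = \{p_x,\rho_x^A\}$ produced by the optimal local measurement on $E$ gives $S_{1/2}(\rho_A) \geq -\log P_{suc}(X|A)$. Chaining these two relations inside the displayed equality yields
\begin{equation*}
C^{\infty}_{1/2}(\rho_{AE}) \geq -\log P_{suc}(X|A) - LR_g^{\infty}(\rho_{AB}),
\end{equation*}
which is the claimed bound.

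The main conceptual step, rather than any computational obstacle, is recognizing that the regularization on the left-hand side is consistent with the regularization on the right-hand side: $C_{1/2}$ picks up a trivially additive $S_{1/2}(\rho_A)$ term plus an $E_f^{1/2}$ term that must be regularized nontrivially, and it is precisely this nontrivial regularization that is resolved by Eq.\eqref{lrgeof} through the Brand\~ao--Plenio identification with $E_C$ and Hayden--Horodecki--Terhal's equality $E_f^\infty = E_C$. Once this is in hand, no optimization over measurements on $E$ need be carried out explicitly, because $P_{suc}(X|A)$ already refers to the ensemble produced by the measurement that achieves $C^{\infty}_{1/2}(\rho_{AE})$, and the bound Eq.\eqref{psuc} holds for every such ensemble.
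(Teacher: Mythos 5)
Your proposal is correct and follows essentially the same route as the paper: regularize the main theorem at $\alpha=1/2$, then substitute the lemma identifying $E_{1/2}^{\infty}$ with $LR_g^{\infty}$ and the lemma bounding $S_{1/2}(\rho_A)$ by $-\log P_{suc}(X|A)$. Your version is in fact slightly more careful than the paper's, since you make explicit the additivity of $S_{1/2}$ under tensor products, which justifies replacing the regularized entropy term by the single-copy $S_{1/2}(\rho_A)$ appearing in the final bound.
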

\begin{proof}
Given a regularized version of Eq.\eqref{t1} for $\alpha = 1/2$
\[
C^{\infty}_{1/2}(\rho_{AE}) =  S^{\infty}_{1/2}(\rho_A) - E^{\infty}_{1/2}(\rho_{AB}),
\]
where $f^{\infty}(\rho) = \lim_{n\rightarrow \infty} \frac{f(\rho^{\otimes n})}{n}$. 
Substituting Eq.\eqref{t.1.renner}, Eq.\eqref{lrgeof} and by linearity of the trace in definition of probability of success in Eq.\eqref{psuc}, it proves the statement. 
\end{proof}
Eq.\eqref{thesao} relates the character of a quantifier of distinguishability of $C_{1/2}$ with its correlation quantifier, 
relating it with the probability of 
success in discriminating the states in the ensemble $\xi_{AE}^{M_E}$ with quantum entanglement quantified by regularized LGR.

\section{Conclusion}
In this work a generalization of Koashi - Winter relation is presented  
by means of the Renyi\apos s entropic version of quantum 
Jensen Shannon divergence.  
From this generalization, some analytical results for quantifiers of classical and quantum correlations are presented. 
A $\alpha$ Renyi\apos s quantifier of classical correlations for $\alpha \in (0,1)$ is also introduced. 
As an application of the main result, a lower bound for $C^{\infty}_{1/2}(\rho_{AE})$ is obtained, 
related to the discrimination of the states in the ensemble, created by the local measurement, and 
the asymptotic log robustness of entanglement.
 This result expresses the character of the quantifier of distinguishability of the states in the ensemble composed of the measurement output states, 
in contrast with its character as a correlation quantifier. 

As a natural extension of these results one can define the Jensen Shannon divergence from the Sandwiched Renyi\apos s  relative entropy \cite{muller2013}
\[ QJSD_{\alpha} = D_{\alpha}(\rho_{AX}||\rho_A\otimes\rho_X), \]
where  
\[ D_{\alpha}(\rho||\sigma) = \frac{1}{\alpha - 1} \log \left\{  \Tr\left[ \left(\sigma^{\frac{1-\alpha}{2\alpha}}  \rho \sigma^{\frac{1-\alpha}{2\alpha}}\right)^{\alpha} \right] \right\}, \]
if $\text{supp}(\rho) \subseteq \text{supp}(\sigma)$, otherwise it is not finite. 
This is known to be monotone decreasing  \cite{hermes2017}. 
From this definition of QJSD one can study its relation with $\alpha$ entanglement of formation 
defined in Ref.\cite{wilde15,berta15b}, that is obtained from Sandwiched Renyi\apos s - relative entropy. 
Another interesting quantifier of correlations 
in this context is quantum discord, defined and explored in Ref.\cite{wilde15},  
obtained starting from the generalization of the Renyi\apos s conditional information \cite{berta15}. 
Some application in quantum information protocols remain to be explored \cite{mosonyi2015}, and related to the generalization of Koashi - Winter relations.

\acknowledgments
The author would like to thank F.F. Fanchini and the Infoquant Group (UFMG) for fruitful discussions. 

\section*{Conflict of Interest Disclosure} 
The author declares that there is no conflict of interest regarding the publication of this paper.

\end{document}